\definecolor{myurlcolor}{rgb}{0,0,0.7}
\definecolor{myrefcolor}{rgb}{0.8,0,0}
\newtheorem{thm}{Theorem}
\newtheorem{thmm}{Theorem}
\newtheorem{faktt}[thmm]{Fact}
\newcommand{\beu}{\begin{equation}}
\newcommand{\eeu}{\end{equation}}
\newcommand{\be}{\begin{eqnarray}}
\newcommand{\ee}{\end{eqnarray}}
\newcommand{\ba}{\begin{array}}
\newcommand{\ea}{\end{array}}
\newcommand{\ket}[1]{|#1\rangle}
\newcommand{\Tr}[0]{\mathrm{Tr}}
\newcommand{\sr}[1]{\langle[#1]\rangle}
\newcommand{\nms}{\negmedspace}
\begin{document}

\title{Elemental and tight monogamy relations in nonsignalling theories}


\author{R. Augusiak}
\affiliation{ICFO--Institut de Ci\`{e}ncies Fot\`{o}niques, 08860
Castelldefels (Barcelona), Spain}

\author{M. Demianowicz}
\affiliation{ICFO--Institut de Ci\`{e}ncies Fot\`{o}niques, 08860
Castelldefels (Barcelona), Spain}

\author{M. Paw\l{}owski}
\affiliation{Instytut Fizyki Teoretycznej i
Astrofizyki, Uniwersytet Gda\'nski, PL-80-952 Gda\'nsk, Poland}

\author{J. Tura}
\affiliation{ICFO--Institut de Ci\`{e}ncies Fot\`{o}niques, 08860
Castelldefels (Barcelona), Spain}

\author{A. Ac\'in}
\affiliation{ICFO--Institut de Ci\`{e}ncies Fot\`{o}niques, 08860
Castelldefels (Barcelona), Spain}
\affiliation{ICREA--Instituci\'o Catalana de Recerca
i Estudis Avan\c{c}ats, Lluis Companys 23, 08010 Barcelona, Spain}

\begin{abstract}
Physical principles constrain the way nonlocal correlations can be
distributed among distant parties. These constraints are usually
expressed by monogamy relations that bound the amount of Bell
inequality violation observed among a set of parties by the
violation observed by a different set of parties. We prove here
that much stronger monogamy relations are possible for
nonsignalling correlations by showing how nonlocal correlations
among a set of parties limit \textit{any} form of correlations,
not necessarily nonlocal, shared among other parties. In
particular, we provide tight bounds between the violation of a
family of Bell inequalities among an arbitrary number of parties
and the knowledge an external observer can gain about outcomes of
\textit{any single} measurement performed by the parties. Finally,
we show how the obtained monogamy relations offer an improvement
over the existing protocols for device-independent quantum
key distribution and randomness amplification.
\end{abstract}
\maketitle

\textit{Introduction.} It is a well established fact that
entanglement and nonlocal correlations (cf. Refs.
\cite{Hreview,NLreview}), i.e., correlations violating a Bell
inequality \cite{Bell}, are fundamental resources of quantum
information theory. It has been confirmed by many instances that,
when distributed among spatially separated observers, they give an
advantage over classical correlations at certain
information-theoretic tasks, many of them being considered in the
multipartite scenario. For instance, nonlocal correlations
outperform their classical counterpart at communication complexity problems
\cite{CommCompl}, and allow for security not achievable within
classical theory \cite{crypto}.

Physical principles impose certain constraints on the way these
resources can be distributed among separated parties; these are
commonly referred to as monogamy relations. For instance, in any
three--qubit pure state one party cannot share large amount of
entanglement, as measured by concurrence, simultaneously with both
remaining parties \cite{Wootters}. Analogous monogamy relations,
both in qualitative \cite{NS,BarrettMon,BKP,rodrigo} and
quantitative \cite{TV,T} form, were demonstrated for nonlocal
correlations, with the measure of nonlocality being the violation
of specific Bell inequalities. In particular, Toner and Verstraete
\cite{TV} and later Toner \cite{T} showed that if three parties
$A$, $B$, and $C$ share, respectively, quantum and general
nonsignalling correlations, then only a single pair can violate
the Clauser-Horne-Shimony-Holt (CHSH) Bell inequality \cite{CHSH}.
These findings were generalized to more complex scenarios
\cite{ns-mon,Pawel} (see also Ref. \cite{Thiago}), and in
particular in \cite{ns-mon} a general construction of monogamy
relations for nonsignalling correlations from any bipartite Bell
inequality was proposed.

In this work, we demonstrate that nonsignalling correlations are
monogamous in a much stronger sense: the amount of nonlocality
observed by a set of parties may imply severe limitations on any
form of correlations with other parties. That is, instead of
comparing nonlocality between distinct groups of parties, we
rather relate it to the knowledge that external
parties can gain on outcomes of any of the measurements performed
by the parties (see Fig. \ref{fig:concept}). To be more
illustrative, consider again parties $A$, $B$, and $C$ performing
a Bell experiment with $M$ observables and $d$ outcomes. We
construct tight bounds between the violation of certain Bell
inequalities \cite{BKP} among any pair of parties, say $A$ and
$B$, and classical correlations that the third party $C$ can
establish with outcomes of any measurement performed by $A$ or
$B$. This means that the amount of \textit{any} correlations ---
classical or nonlocal --- that $C$ could share with $A$ or $B$ is
bounded by the strength of the Bell inequality violation between
$A$ and $B$. Our monogamies are further generalized to the
scenario with an arbitrary number of parties $N$ [$(N,M,d)$
scenario] with nonlocality measured by the recent generalization
of the Bell inequalities \cite{BKP} presented in Ref.
\cite{rodrigo}. The obtained monogamy relations are logically
independent from, and in fact stronger than, the existing
relations involving only nonlocal correlations, as a bound on
nonlocal correlations does not necessarily imply any nontrivial
constraint on the amount of classical correlations.

\begin{figure}[t]
(a)\includegraphics[width=0.175\textwidth]{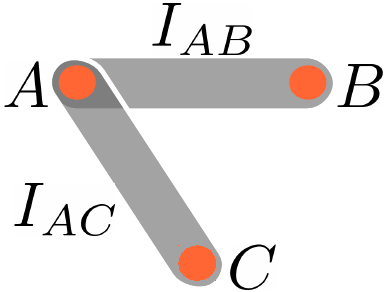}
(b)\includegraphics[width=0.2\textwidth]{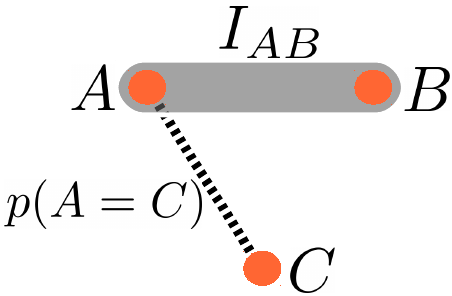}
\caption{
%
(a) The usual monogamies compare nonlocality (measured by the
value of some Bell inequality $I$) between different groups of
parties (here between two pairs of parties $AB$ and $AC$).
Instead, our monogamy relations compare nonlocality observed by a
group of parties (here $AB$) to the knowledge, represented by the
probability $p(A=C)$, that the third party $C$ can have about
outcomes observed by either of the parties. As such, they are
qualitatively different, and in fact stronger than those of type
(a).}\label{fig:concept}
\end{figure}


Our new monogamy relations prove useful in device-independent
protocols \cite{device-ind,DIRNG,collbeck-renner,andrzej,qit}. First, we
show that they impose tight bounds on the guessing probability,
the commonly used measure of randomness, that are significantly
better than the existing ones \cite{BKP,rodrigo}. We then argue
that this translates into superior performance in protocols for
device-independent quantum key distribution (DIQKD) \cite{Lluis2}
using measurements with more than two outputs. Finally, we show
that they allow for a generalization of the results of
\cite{collbeck-renner} on randomness amplification to any number
of parties and outcomes, demonstrating, in particular, that
arbitrary amount of arbitrarily good randomness can be amplified
in a bipartite setup.

Before turning to the results, we provide some background.
Consider $N$ parties $A^{(1)},\ldots,A^{(N)}$ (for $N=3$
denoted by $A,B,C$), each measuring one of $M$
possible observables $A^{(i)}_{x_i}$ $(x_i=1,\ldots,M)$ with $d$
outcomes (enumerated by $a_i=1,\ldots,d$) on their local physical
systems. The produced correlations are described by a
collection of probabilities $p(A^{(1)}_{x_1}=a_1,\ldots,
A^{(N)}_{x_{N}}=a_{N})\equiv p(a_1\ldots a_{N}|x_1\ldots x_{N})\equiv
p(\boldsymbol{a}|\boldsymbol{x} )$ of obtaining
results $\boldsymbol{a}\equiv a_1\ldots a_{N}$
upon measuring $\boldsymbol{x}\equiv x_1\ldots x_{N}$.
One then says that the correlations
$\{p(\boldsymbol{a}|\boldsymbol{x})\}$ are (i) nonsignalling
(NC) if any of the marginals describing a subset of parties is
independent of the measurements choices made by the remaining
parties and (ii) quantum (QC) if they arise by local measurements
on quantum states (cf. \cite{NLreview}).


\textit{Elemental and tight monogamies for nonsignalling
correlations.} We start with the derivation of our monogamy
relations in the case of nonsignalling correlations. For
clarity, we begin with the simplest tripartite scenario. We will use the Bell inequality introduced by Barrett, Kent, and Pironio (BKP) \cite{BKP}. Denoting by $\langle \Omega\rangle$ the mean value of a random variable $\Omega$, that is, $\langle
\Omega\rangle=\sum_{i=1}^{d-1}iP(\Omega=i)$, it reads
\begin{equation}\label{BKPNd}
I^{2,M,d}_{AB}:=\sum_{\alpha=1}^{M}\left(\sr{A_{\alpha}-B_{\alpha}}
+\sr{B_{\alpha}-A_{\alpha+1}}\right)\geq
d-1
\end{equation}
with $[\Omega]$ being $\Omega$ modulo $d$, and
$\Omega_{M+1}:=[\Omega_1+1]$. For $d=2$, Ineq.
(\ref{BKPNd}) reproduces the chained Bell inequalities \cite{BC},
while for $M=2$ the Collins-Gisin-Linden-Massar-Popescu (CGLMP)
inequalities \cite{Collins}. The maximal nonsignalling violation
of (\ref{BKPNd}) is $I_{AB}^{2,M,d}=0$.

The only monogamy relations for (\ref{BKPNd}) have been formulated
in terms of its violations between Alice and $M$ Bobs
\cite{ns-mon}, which is a natural quantitative extension of the
concept of $M$-shareability \cite{NS}. In the
following theorem we show that the BKP Bell inequalities allow one
to introduce \textit{elemental} monogamies obeyed
by any NC.

\begin{thm}\label{thm:BKPNd}
For any tripartite NC $\{p(abc|xyz)\}$ with $M$ $d$-outcome
measurements, the inequality
\begin{equation}\label{thm:BKPNd:1}
I^{2,M,d}_{AB}+\sr{X_i-C_j}+\sr{C_j-X_i}\geq d-1
\end{equation}
holds for any pair
$i,j=1,\ldots,M$ and $X$ denoting $A$ or $B$.
\end{thm}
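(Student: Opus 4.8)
The plan is to recast the claim as a bound on how well the third party can guess a single outcome, and then prove it by a telescoping (triangle-inequality) argument anchored at $C_j$. First I would record the elementary identity $\sr{X_i - C_j} + \sr{C_j - X_i} = d\,P(X_i \neq C_j)$, which holds because $[u]+[-u]$ equals $d$ when $u\neq 0$ and $0$ otherwise; this turns the target (\ref{thm:BKPNd:1}) into the equivalent statement $I^{2,M,d}_{AB} \geq d\,P(X_i = C_j) - 1$, i.e. the guessing bound $P(X_i=C_j)\leq (I^{2,M,d}_{AB}+1)/d$.

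The engine of the proof is the subadditivity $[u+v]\leq[u]+[v]$ of reduction modulo $d$, which yields $[x-z]\leq [x-y]+[y-z]$ pointwise and hence $\sr{U-W}\leq \sr{U-V}+\sr{V-W}$ for any three random variables that are jointly distributed. The crucial observation is that, although two different settings of the same party are never jointly measurable, every triple of the form $(C_j, A_\alpha, B_\beta)$ — one measurement per party — is a legitimate marginal of the nonsignalling box, so subadditivity may be applied along any single edge of the BKP chain with $C_j$ playing the role of a fixed anchor. Concretely, for each correlator $\sr{P-Q}$ occurring in $I^{2,M,d}_{AB}$ (with $P,Q$ consecutive vertices of the walk $A_1\to B_1\to A_2\to\cdots\to B_M\to A_{M+1}$) I would use $\sr{C_j-Q}\leq \sr{C_j-P}+\sr{P-Q}$ and sum these inequalities along the walk. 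The $\sr{C_j-\cdot}$ contributions telescope, leaving $I^{2,M,d}_{AB}\geq \sr{C_j-A_{M+1}}-\sr{C_j-A_1}$.

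To reach an arbitrary $X_i$ I would exploit the cyclic structure: since $A_{M+1}:=[A_1+1]$ and the correlators $\sr{\cdot-\cdot}$ are invariant under a global shift of all values, the telescoped chain may be cut and started at any vertex $X_i$ and run once around, returning to the shifted copy of $X_i$; this gives $I^{2,M,d}_{AB}\geq \sr{C_j-X_i-1}-\sr{C_j-X_i}$ and covers both $X=A$ and $X=B$. The final step is a one-line evaluation of the right-hand side over the genuine two-party marginal $(X_i,C_j)$: pointwise $[c-a-1]-[c-a]$ equals $d-1$ when $a=c$ and $-1$ otherwise, so the difference equals $d\,P(X_i=C_j)-1$, and combining with the identity of the first paragraph closes the argument. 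I expect the only real subtlety to be bookkeeping: ensuring that every subadditivity step rests on a valid one-measurement-per-party marginal — this is exactly where nonsignalling is used, and why no nontrivial lower bound on $I^{2,M,d}_{AB}$ alone can exist — and correctly tracking the $+1$ winding inherited from $A_{M+1}=[A_1+1]$ once the chain is cut at $X_i$.
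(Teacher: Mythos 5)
Your proof is correct, but it takes a genuinely different route from the paper's. The paper proceeds by an exact algebraic regrouping: starting from the identity $\sr{\Omega}+\sr{-\Omega-1}=d-1$ (its Fact 1(a)), it adds a judiciously chosen ``zero expression'' to $I^{2,M,d}_{AB}+\sr{A_i-C_j}+\sr{C_j-A_i}$ and rewrites the sum as $M$ chained Bell expressions of the type $I^{2,2,d}$ distributed among the three parties (with $B$ and $C$ each measuring only a single observable), minus $(M-1)(d-1)$; it then invokes the known result of Ref.~\cite{ns-mon} that the nonsignalling minimum of each such distributed expression equals its classical bound $d-1$. Your argument instead proves everything from scratch: the pointwise subadditivity $[u+v]\le[u]+[v]$ gives a triangle inequality for $\sr{\cdot}$ on every one-setting-per-party marginal --- and this is exactly where nonsignalling enters, since it makes the bipartite correlators $\sr{C_j-A_\alpha}$, $\sr{C_j-B_\alpha}$ context-independent so that they genuinely cancel in the telescoping --- after which the chain re-rooted at $X_i$ yields $I^{2,M,d}_{AB}\ge\sr{C_j-X_i-1}-\sr{C_j-X_i}=d\,p(X_i=C_j)-1$, equivalent to the theorem via Fact 1(b). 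What each approach buys: yours is self-contained and elementary, making the role of nonsignalling explicit rather than delegated to a cited lemma; the paper's decomposition ties the new elemental monogamy to the shareability-type monogamy of \cite{ns-mon}, and its grouping strategy is precisely what gets reused in the inductive proof of the $N$-partite generalization (Theorem 2). One cosmetic shortcut for your last step: by Fact 1(a), $\sr{C_j-X_i-1}=d-1-\sr{X_i-C_j}$, so your lower bound $\sr{C_j-X_i-1}-\sr{C_j-X_i}$ equals $d-1-\sr{X_i-C_j}-\sr{C_j-X_i}$ directly, giving the claimed inequality without passing through the guessing-probability form.
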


Interestingly, all these inequalities are tight in the sense that
for any values of $I_{AB}^{2,M,d}$ and $\sr{X_i-C_j}+\sr{C_j-X_i}$
saturating (\ref{thm:BKPNd:1}), one can find
NC realizing these values. Take, for instance, a probability distribution
$\{p(a,b,c|x,y,z)=p(a,b|x,y)p(c|z)\}$, with
$\{p(a,b|x,y)\}$ being a mixture of a nonlocal
model maximally violating (\ref{BKPNd})
%
%
and a local deterministic one saturating it. Then, $\{p(c|z)\}$
is the same distribution as the one used by $A$ or $B$ in the local model
saturating (\ref{BKPNd}).
%

The physical interpretation of our monogamies can be now
concluded if we rewrite them in a bit different form. Using the
fact that for any variable $\Omega$,
$\sr{\Omega}+\sr{-\Omega}=dP([\Omega]\neq 0)=d[1-P([\Omega]=0)]$
\cite{supplement}, Ineqs. (\ref{thm:BKPNd:1}) transform to
\begin{equation}\label{monogII}
I^{2,M,d}_{AB}+1\geq
dp(X_{i}=C_{j})
\end{equation}
for $X=A,B$, and any pair $i,j=1,\ldots,M$. These relations hold
if $AB$ is replaced by any pair of parties and if any
$m=1,\ldots,d-1$ is added modulo $d$ to the argument of
probability. The meaning of the introduced monogamy relations is
now transparent. The probability $p(X_{i}=C_{j})$ that parties $X$
and $C$ obtain the same results upon measuring the $i$th and $j$th
observables is a measure of how the outcomes of these measurements
are classically correlated. Consequently, Ineqs.
(\ref{thm:BKPNd:1}) establish trade-offs between nonlocality, as
measured by (\ref{BKPNd}), that can
be generated between any two parties and classical correlations
that the third party can share with the results of any measurement
performed by any of these two parties. Furthermore, they are
tight. In fact, it is known that the maximal NC violation of
(\ref{BKPNd}), $I^{2,M,d}_{AB}=0$, implies $p(X_i=C_{j})=1/d$ for
any $i,j=1,\ldots,M$, meaning that at the point of maximal
violation $C$ cannot share any correlations with any other party's
measurement outcomes~\cite{BKP}. On the other hand, it is well
known that at the point of no violation $C$ can be arbitrarily
correlated with $A$ and $B$. For intermediate violations, the best
one can hope for is a linear interpolation between these two
extreme values and this is precisely what our monogamy relations
predict, see Fig.~\ref{fig:comparison}.

Let us now move to the general case of an arbitrary number of
parties each having $M$ $d$-outcome observables at their disposal.
We will utilize the generalization of the Bell inequality
(\ref{BKPNd}) introduced in Ref. \cite{rodrigo}, which can be
stated as
\begin{equation}\label{BKPNMd}
I_{\mathsf{A}}^{N,M,d}\geq d-1
\end{equation}
%
with $\mathsf{A}=A^{(1)}\ldots A^{(N)}$. Since the form of
$I_{\mathsf{A}}^{N,M,d}$ is rather lengthy and actually not
relevant for further considerations, for
clarity, we omit presenting it here (see
\cite{supplement}). We only mention that it can be recursively determined from $I_{AB}^{2,M,d}$ and that its minimal nonsignalling value is
$I_{\mathsf{A}}^{N,M,d}=0$. Then, the generalization of Theorem
\ref{thm:BKPNd} to arbitrary $N$ goes as follows.
\begin{thm}\label{thm:BKPNMd}
For any $(N+1)$--partite NC
$\{p(\boldsymbol{a}|\boldsymbol{x})\}$ with $M$ $d$-outcome measurements
per site, the following inequality
\begin{equation}\label{thm:BKPNMd:1}
I^{N,M,d}_{\mathsf{A}}+\sr{A_{x_k}^{(k)}-A^{(N+1)}_{x_{N+1}}}+\sr{A^{(N+1)}_{x_{N+1}}-A_{x_k}^{(k)
} } \geq d-1
\end{equation}
is satisfied for any $x_k,x_{N+1}=1,\ldots,M$ and $k=1,\ldots,N$.
\end{thm}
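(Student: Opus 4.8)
The plan is to prove (\ref{thm:BKPNMd:1}) by induction on the number of parties $N$, using Theorem~\ref{thm:BKPNd} as the base case $N=2$ (where $A^{(1)},A^{(2)},A^{(3)}$ play the roles of $A,B,C$). The one analytic tool I would rely on throughout is the subadditivity of the modular-difference mean: for random variables admitting a common joint distribution, $\sr{\Omega_1-\Omega_3}\le\sr{\Omega_1-\Omega_2}+\sr{\Omega_2-\Omega_3}$, obtained by taking expectations of the pointwise inequality $[a+b]\le[a]+[b]$. Nonsignalling enters at exactly one point: any term $\sr{U-V}$ whose arguments jointly involve at most one observable per party is determined by a well-defined marginal, hence takes the same value in every measurement context containing it. This context-independence is what lets a long chain of triangle inequalities telescope, and it is where the hypothesis that $\{p(\boldsymbol{a}|\boldsymbol{x})\}$ is nonsignalling becomes indispensable.

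Before the induction I would recast the claim in sharper form. By the identity $\sr{\Omega}+\sr{-\Omega}=d\,P([\Omega]\neq0)$ quoted in the text, the two monogamy terms equal $d\,P(A^{(k)}_{x_k}\neq A^{(N+1)}_{x_{N+1}})$, so (\ref{thm:BKPNMd:1}) is equivalent to $I^{N,M,d}_{\mathsf{A}}+1\ge d\,P(A^{(k)}_{x_k}=A^{(N+1)}_{x_{N+1}})$. This exhibits it as a genuine trade-off rather than a consequence of the Bell inequality $I^{N,M,d}_{\mathsf{A}}\ge d-1$, since for nonsignalling correlations $I^{N,M,d}_{\mathsf{A}}$ may drop to its minimum $0$.

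The mechanism driving both base case and inductive step is to read the whole left-hand side as a single closed BKP-type cycle for the $N+1$ parties. I would anchor it at a deterministic term $\sr{\Omega_0-\Omega_0'}=d-1$, where $\Omega_0'=[\Omega_0+1]$ and $\Omega_0$ is the combination on which the recursive construction of $I^{N,M,d}_{\mathsf{A}}$ closes, and then insert $A^{(N+1)}_{x_{N+1}}$ as a hub, $\sr{\Omega_0-\Omega_0'}\le\sr{\Omega_0-A^{(N+1)}_{x_{N+1}}}+\sr{A^{(N+1)}_{x_{N+1}}-\Omega_0'}$. Each of the two pieces is then telescoped by repeated subadditivity along the links that make up $I^{N,M,d}_{\mathsf{A}}$, every individual step using a single context with one observable per party so that the intermediate terms cancel. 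Summing the two telescopes returns precisely $I^{N,M,d}_{\mathsf{A}}$ together with one round trip $\sr{A^{(k)}_{x_k}-A^{(N+1)}_{x_{N+1}}}+\sr{A^{(N+1)}_{x_{N+1}}-A^{(k)}_{x_k}}$ through the hub at the node $A^{(k)}_{x_k}$. In the inductive packaging, the recursion expressing $I^{N,M,d}_{\mathsf{A}}$ through $I^{N-1,M,d}$ is used to spend the links of party $A^{(N)}$ on one side of the hub and to reduce the other side to the hypothesis applied to $A^{(1)},\ldots,A^{(N-1)},A^{(N+1)}$.

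The step I expect to be the main obstacle is verifying that this cycle closes uniformly for every admissible triple $(k,x_k,x_{N+1})$. Since the hub may be attached at an arbitrary node $A^{(k)}_{x_k}$, and the links of the multipartite inequality can couple several parties at once, I must use the explicit recursive form of $I^{N,M,d}_{\mathsf{A}}$ to confirm that the two telescopes can always be routed so their union is exactly $I^{N,M,d}_{\mathsf{A}}$ plus the round trip, with no residual terms, and that every intermediate marginal invoked is one that nonsignalling genuinely fixes. Carrying out this bookkeeping uniformly in $k$, rather than only for a distinguished party, is the real content; once the routing is fixed, the remaining algebra ($[a+b]\le[a]+[b]$ and the closure of the chain with a net $+1$) is routine.
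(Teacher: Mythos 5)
Your core mechanism is sound, and for the base case it is essentially a self-contained version of what the paper does: the paper rewrites $I^{2,M,d}_{AB}+\sr{A_i-C_j}+\sr{C_j-A_i}$ as a sum of $M$ four-node cycles (by adding a telescoping ``zero expression'') and then cites Ref.~\cite{ns-mon} for the fact that each such cycle, in which all but one party hold a single observable, has nonsignalling minimum $d-1$; the proof of that cited fact is precisely your triangle-inequality-with-context-independence argument. So for $N=2$, and more generally for any $k$ such that $A^{(k)}_{x_k}$ appears as an explicit node in \emph{every} term of the recursive decomposition of $I^{N,M,d}_{\mathsf{A}}$, your hub telescope closes and the argument is correct.

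The gap is the step you dismiss as bookkeeping: uniformity in $k$ is not routing freedom, and for $k=N$ the telescope as you describe it cannot be made to close. Write the recursion as $I^{N,M,d}_{\mathsf{A}}=\frac{1}{M}\sum_{\alpha_{N-1}}T_{\alpha_{N-1}}$ with $T_{\alpha_{N-1}}=I^{N-1,M,d}(\alpha_{N-1})\circ A^{(N)}_{\alpha_{N-1}}$. The observable $A^{(N)}_{x_N}$ occurs only in the single term $\alpha_{N-1}=x_N$, and there only inside composite nodes of the form $A^{(N-1)}_{\cdot}-A^{(N)}_{x_N}$. Since your telescope acts term by term, it could at best prove the per-term monogamy $T_{x_N}+\sr{A^{(N)}_{x_N}-A^{(N+1)}_{x_{N+1}}}+\sr{A^{(N+1)}_{x_{N+1}}-A^{(N)}_{x_N}}\geq d-1$, and this statement is \emph{false} for nonsignalling correlations. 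Counterexample for $N=3$: let parties $1$ and $2$ share the nonsignalling box that brings the relabeled chain $I^{2,M,d}(x_3)$ to its minimum $0$, and let parties $3$ and $4$ output $0$ deterministically; the overall distribution is a product and hence nonsignalling, yet $T_{x_3}=0$ and both round-trip terms vanish, so the per-term left-hand side equals $0<d-1$. (The theorem itself survives because the other terms $T_{\alpha_2}$, $\alpha_2\neq x_3$, are then forced to be large --- the inequality for $k=N$ lives entirely in the cross-term structure of the average, which a term-by-term telescope never sees.) The missing idea, which the paper supplies exactly at this point, is the invariance of $I^{N,M,d}_{\mathsf{A}}$ under suitable party exchanges established in Ref.~\cite{rodrigo} (for $N=3$, swapping the first and third parties; in general the last and the $(N-2)$th). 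This symmetry produces an equivalent decomposition in which party $N$ sits in the ``chain'' role, so that $A^{(N)}_{x_N}$ is an explicit node of every term and the telescope does close. Without invoking this symmetry, or an equivalent re-decomposition, your plan proves the theorem only for $k$ strictly less than $N$, and the case you yourself flagged as the main obstacle remains unproven.
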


All the properties of the three-partite monogamy relations
persist for any $N$. In particular, all
inequalities (\ref{thm:BKPNMd:1}) are tight. Moreover, they can be
rewritten as
\begin{equation}\label{monogII}
I^{N,M,d}_{\mathsf{A}}+1\geq
dp(A^{(k)}_{x_k}=[A^{(N+1)}_{x_{N+1}}+m])
\end{equation}
for any $x_k,x_{N+1}=1,\ldots,M$, $k=1,\ldots,N$ and
$m=0,\ldots,d-1$ and remain valid if the nonlocality is tested
among any $N$-element subset of $N+1$ parties. Analogously to the
three-partite case, Ineqs. (\ref{monogII}) tightly relate the
nonlocality observed by any $N$ parties, as measured by
$I_{\mathsf{A}}^{N,M,d}$, and correlations that party $(N+1)$ can
share between measurement outcomes of any of these $N$ parties. It
is worth pointing out that for $d=2$ it holds
$\sr{X-Y}=\sr{Y-X}$, and Ineqs. (\ref{thm:BKPNMd:1}) simplify to
$I^{N,M,2}_{\mathsf{A}}+2\sr{A_{x_k}^{(k)}-A^{(N+1)}_{x_{N+1}}}\geq
1$ which can be rewritten in a more familiar form as
%
%
$|\langle A_{x_k}^{(k)}A^{(N+1)}_{x_{N+1}}\rangle|\leq
I^{N,M,2}_{\mathsf{A}}$,
%
where $A_{x_k}^{(k)}$ stand now
for dichotomic observables with outcomes $\pm1$, while $\langle
XY\rangle=P(X=Y)-P(X\neq Y)$. Thus, the strength of violation of
%
%
(\ref{BKPNMd}) imposes tight bounds on a \textit{single} mean
value $\langle A_{x_k}^{(k)}A^{(N+1)}_{x_{N+1}}\rangle$ for any
$x_k,x_{N+1}$ and $k=1,\ldots,N$, which is also a measure of how
outcomes of a measurement performed by the external party
$A^{(N+1)}$ are correlated to those of $A^{(k)}$ for any $k$. In
particular, when $I_{\mathsf{A}}^{N,M,2}=0$ (maximal
nonsignalling violation), all these means are zero,
while maximal correlations between a single pair of measurements,
i.e., $\langle A^{(k)}_{x_{k}}A^{(N+1)}_{x_{N+1}}\rangle=\pm1$ for
some $x_k,x_{N+1}$, make the $N$ parties unable to violate
$I_{\mathsf{A}}^{N,M,2}\ge 1$.

\textit{Bounds on randomness.} Our monogamies are of particular
importance for device--independent applications since they
imply upper bounds on the guessing probability (GP) of the
outcomes of any measurement performed by any of the $N$ parties by
the additional party, here called $E$. To be precise, assume that
$E$ has full knowledge about all parties devices and their
measurement choices and wishes to guess the outcomes of, say
$A_{x_k}^{(k)}$. The best $E$ can do for this purpose is to simply
measure one of its observables, say the $z$th one, and,
irrespectively of the obtained result, deliver the most probable
outcome of $A^{(k)}_{x_k}$. Then,
$\max_{a_k}p(A_{x_k}^{(k)}=a_k)=p(E_{z}=A^{(k)}_{x_k})$, and
Ineqs. (\ref{monogII}) imply that for any $x_k$ and $k$, GP is
bounded as
\begin{equation}\label{guessing}
\max_{a_k}p(a_k|x_k)\equiv\max_{a_k}p(A_{x_k}^{(k)}=a_k)\leq
\frac{1}{d}(1+I_{\mathsf{A}}^{N,M,d}).
\end{equation}
These bounds are tight and significantly stronger
than the previously existing one,
\begin{equation}\label{guessing-bkp} \max_{a_k}p(a_k|x_k)\leq
\frac{1}{d}\left(1+\frac{d^N}{4}(N-1)I_{\mathsf{A}}^{N,M,d}\right)
\end{equation}
derived in Refs. \cite{BKP,rodrigo} (see Fig.
\ref{fig:comparison}).
\begin{figure}[t]
(a)\includegraphics[width=0.225\textwidth]{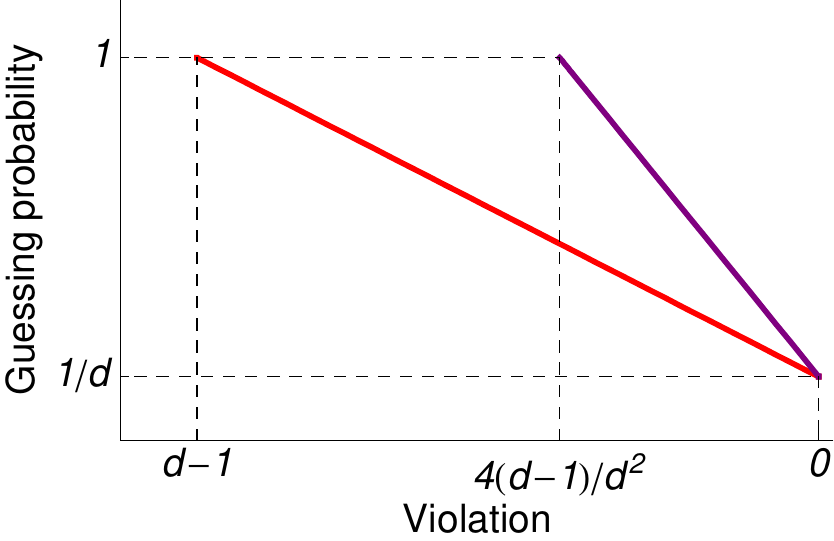}
(b)\includegraphics[width=0.205\textwidth,trim=0 -18.7 0
0]{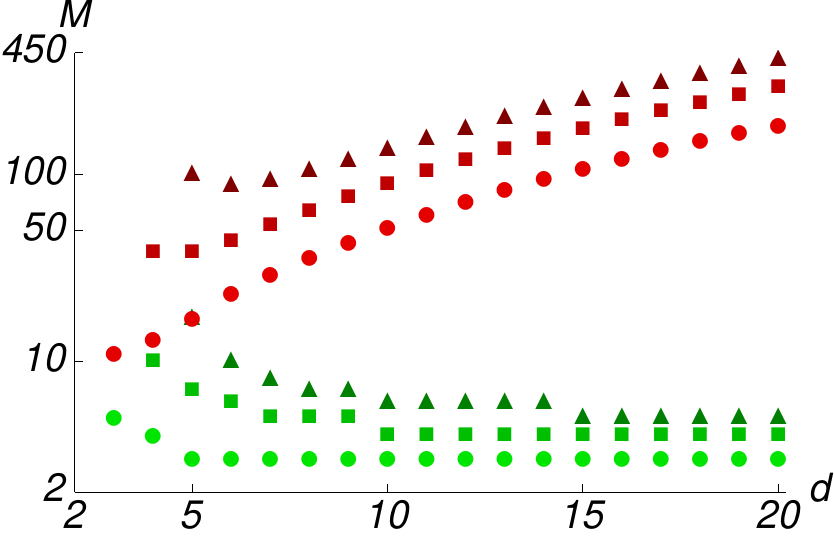} \caption{(a) Comparison of the upper bounds on
GP: present bound (\ref{guessing}) (red line) and
(\ref{guessing-bkp}) (purple line).
%
%
Our bound  is tight -- for any value $0\leq
I_{\mathsf{A}}^{N,M,d}\leq d-1$, it provides the maximum
attainable value of GP. Instead, the bound (\ref{guessing-bkp}) is
nontrivial only in some restricted range of
$I_{\mathsf{A}}^{N,M,d}$, namely when $I_{\mathsf{A}}^{N,M,d} <
4(d-1)/d^2$, which tends to zero for $d\to\infty$. (b) Minimal
number of measurements $M$ on a maximally entangled state of local
dimension $d$ necessary for the secret-key rate $R$ secure against
non-signalling eavesdroppers to be at least:
one (dots), $\log_23$ (squares), and two (triangles) bits, when
(\ref{guessing}) (green) and (\ref{guessing-bkp}) (red) are used
to bound $R$. Using our bound the parties need to use many fewer
measurements to reach the same key rate. Moreover, contrary to
what is predicted by the previous bound, the number of measurement
decreases with the dimension.
}\label{fig:comparison}
\end{figure}

Let us now discuss how the bound (\ref{guessing}) performs in
comparison to (\ref{guessing-bkp}) in security proofs of DIQKD
against no-signalling eavesdroppers. At the moment, a general
security proof in this scenario is missing and the strongest proof
requires the assumption that the eavesdropper $E$ is not only
limited by the no-signalling principle but also lacks a long-term
quantum memory (so--called bounded-storage model) \cite{Lluis2}. 
Assume that Alice and Bob share a two-qudit maximally entangled state and they
use it to maximally violate (\ref{BKPNd}) by performing the optimal
measurements for this setup (see, e.g., \cite{BKP}). To generate
the secure key, Bob performs one more measurement that is
perfectly correlated to one of Alice's measurements. The key
rate of this protocol is lower-bounded as $R\geq
-\log_2[\tau(I_{AB}^{2,M,d})]-H(A|B)$ \cite{Lluis2}, where $\tau$ is any
upper bound on GP for nonsignalling correlations.
and $H(A|B)$ is the conditional Shannon entropy between
Alice and Bob for the measurements used to generate the secret
key. As the state is maximally entangled, this term is equal to zero.
Fig. \ref{fig:comparison} compares bounds on the secret key
obtained by using our bound (\ref{guessing}) and the previous
bound (\ref{guessing-bkp}) in this protocol. We fix the key rate
and compute the minimal number of measurements needed to attain
this rate using these bounds as a function of the number of
outputs. As shown in Fig. \ref{fig:comparison}, the number of
measurements when using our bound is much smaller and, in
particular, decreases with the number of outputs.

\textit{Randomness amplification.} Let us finally show the usefulness
of our monogamy relations in randomness amplification.
Assume that each party is given a sequence of bits produced by the Santha--Vazirani (SV) source (or the $\varepsilon$--source). Its working is defined as follows: it produces a sequence $y_1, y_2,\cdots, y_n$ of bits according to
\begin{equation}\label{sv}
\tfrac{1}{2}-\varepsilon \le p(y_k| w)\le
\tfrac{1}{2}+\varepsilon, \quad k=1,\ldots,n,
\end{equation}
where $w$ denotes any space-time variable that could be the cause
of $y_k$. Thus the bits are possibly correlated with each other
retaining, however, some intrinsic randomness --- we say that they
are $\varepsilon$--free. The goal is now to obtain a perfectly
random bit (or more generally $d$it) from an arbitrarily long
sequence of $\varepsilon$--free bits by using quantum correlations
that violate the Bell inequality (\ref{BKPNMd}).
This procedure is called randomness amplification (RA).

It is useful to recast this task in the adversarial picture
\cite{collbeck-renner}, in which one assumes that an adversary
$E$, using the $\varepsilon$--sources, wants to simulate the
quantum violation of (\ref{BKPNMd}) by NC, in
particular the local ones. The random variable $W$ is now held by
$E$ who uses it to control both the $\varepsilon$-sources and the
physical devices possessed by the parties. That is, for every
value $w$ of $W$ the former provides settings $\boldsymbol{x}$
with probabilities obeying (\ref{sv}), while these devices
generate the $N$-partite probability distribution
$\{p(\boldsymbol{a}|\boldsymbol{x},w)\}_{\boldsymbol{a},\boldsymbol{x}}$.
Using (\ref{guessing}), we can now restate and generalize Lemma 1
of \cite{collbeck-renner} (see \cite{supplement}).

\begin{thm}Let
$\{p(\boldsymbol{a}|\boldsymbol{x},w)\}_{\boldsymbol{a},\boldsymbol{x}}$ be
a nonsignalling probability distribution for any $w$. Then
for any $\boldsymbol{x}$ and $k=1,\ldots,N$:
\begin{equation}\label{VIB}
\sum_{a_k,w}\left|p(a_k,w|\boldsymbol{x})-\widetilde{p}(a_k)p(w|\boldsymbol{x})\right|
\leq
\tfrac{(d-1)^2+1}{d}\,Q_M(\boldsymbol{x})
I^{N,M,d}_{\mathsf{A}},
\end{equation}
where $\widetilde{p}(a)=1/d$ for any $a$,
$\{p(a_k,w|\boldsymbol{x})\}_{a_k,w}$ describes correlations between outcomes obtained
by party $k$ and the random variable $W$ for the measurements choice $\boldsymbol{x}$, and
$I^{N,M,d}_{\mathsf{A}}$ is taken in the probability distribution
$\{p(\boldsymbol{a}|\boldsymbol{x})\}$ observed by the parties. Finally,
$Q_M(\boldsymbol{x})=\max_{w}[p(w|\boldsymbol{x})/p_{\min}(w) ] $,
where $p_{\min}(w)=\min_{\boldsymbol{x}}\{p(w|\boldsymbol{x})\}$ with
minimum taken over those measurement settings $\boldsymbol{x}$
that appear in $I_{\mathsf{A}}^{N,M,d}$.
%
\end{thm}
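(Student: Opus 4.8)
The plan is to reduce the claimed bound to the single-party guessing bound (\ref{guessing}) applied conditionally on the hidden variable $W$, and then to transfer the resulting conditional Bell values to the observed value $I^{N,M,d}_{\mathsf{A}}$ using the nonnegativity of the Bell expression together with the Santha--Vazirani constraint (\ref{sv}). First I would rewrite the left-hand side. Since $p(a_k,w|\boldsymbol{x})=p(a_k|\boldsymbol{x},w)\,p(w|\boldsymbol{x})$ and $\widetilde{p}(a_k)=1/d$, the sum becomes $\sum_w p(w|\boldsymbol{x})\sum_{a_k}|p(a_k|\boldsymbol{x},w)-1/d|$. Because $\{p(\boldsymbol{a}|\boldsymbol{x},w)\}$ is nonsignalling for every fixed $w$, the marginal $p(a_k|\boldsymbol{x},w)$ depends only on $x_k$, so I may replace it by $p(a_k|x_k,w)$ and work party by party.

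Next I would bound the inner sum for each fixed $w$. Applying the guessing bound (\ref{guessing}) to the conditional distribution gives $p(a_k|x_k,w)\le \frac{1}{d}\bigl(1+I(w)\bigr)$ for every outcome $a_k$, where $I(w)$ denotes $I^{N,M,d}_{\mathsf{A}}$ evaluated in $\{p(\boldsymbol{a}|\boldsymbol{x},w)\}$. I would then split $\sum_{a_k}|p(a_k|x_k,w)-1/d|$ into the most likely outcome and the remaining $d-1$ ones. The former contributes at most $I(w)/d$ directly from the guessing bound. For each of the others, writing $\frac{1}{d}-p(a_k|x_k,w)=\sum_{a'\neq a_k}p(a'|x_k,w)-\frac{d-1}{d}$ and using $p(a'|x_k,w)\le \frac{1+I(w)}{d}$ shows the deficit is at most $(d-1)I(w)/d$, while the excess is even smaller. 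Summing, the inner quantity is bounded by $\tfrac{1}{d}I(w)+(d-1)\tfrac{d-1}{d}I(w)=\tfrac{(d-1)^2+1}{d}I(w)$, which is exactly where the prefactor originates.

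Finally I would transfer $\sum_w p(w|\boldsymbol{x})I(w)$ to the observed Bell value. The key structural fact is that $I^{N,M,d}_{\mathsf{A}}$ is a sum of terms $\langle\Omega\rangle=\sum_{i=1}^{d-1}iP(\Omega=i)\ge 0$, each evaluated at a single global setting $\boldsymbol{x}_t$ occurring in the inequality. Decomposing $p(\boldsymbol{a}|\boldsymbol{x}_t)=\sum_w p(\boldsymbol{a}|\boldsymbol{x}_t,w)\,p(w|\boldsymbol{x}_t)$ and using linearity, $I^{N,M,d}_{\mathsf{A}}=\sum_t\sum_w p(w|\boldsymbol{x}_t)\langle\Omega_t\rangle(w)$; since every term is nonnegative and $p(w|\boldsymbol{x}_t)\ge p_{\min}(w)$, this is at least $\sum_w p_{\min}(w)I(w)$. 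Combining with $p(w|\boldsymbol{x})\le Q_M(\boldsymbol{x})\,p_{\min}(w)$ yields $\sum_w p(w|\boldsymbol{x})I(w)\le Q_M(\boldsymbol{x})\,I^{N,M,d}_{\mathsf{A}}$, and inserting this into the estimate of the previous paragraph gives the claim. The main obstacle is precisely this last transfer step: because the settings in the adversarial picture are correlated with $w$, the weights $p(w|\boldsymbol{x}_t)$ differ across the terms of the Bell expression, so one cannot simply factor out a single $p(w)$. The nonnegativity of each correlator term is what makes the one-sided estimate $p(w|\boldsymbol{x}_t)\ge p_{\min}(w)$ usable, and $Q_M(\boldsymbol{x})$ is exactly the constant needed to undo the replacement of $p(w|\boldsymbol{x})$ by $p_{\min}(w)$.
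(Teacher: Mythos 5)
Your proposal is correct and follows essentially the same route as the paper's own proof: condition on $w$, apply the guessing bound together with normalization to the conditional marginal to obtain the prefactor $[(d-1)^2+1]/d$ times $I^{N,M,d}_{\mathsf{A}}(w)$, and then transfer the $p(w|\boldsymbol{x})$-weighted average of the conditional Bell values to the observed value using the nonnegativity of each correlator term and the chain $p(w|\boldsymbol{x})\leq Q_M(\boldsymbol{x})\,p_{\min}(w)\leq Q_M(\boldsymbol{x})\,p(w|\boldsymbol{x}_t)$, which is exactly the paper's insertion of the ratios $p(w|\boldsymbol{x})/p(w|\alpha,\alpha)\leq Q_M(\boldsymbol{x})$ into the correlator decomposition. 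The only cosmetic difference is that you phrase the transfer step directly for general $N$ via the recursive structure of $I^{N,M,d}_{\mathsf{A}}$, whereas the paper writes it out for $N=2$ and notes the generalization is straightforward.
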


It then follows that if correlations $\{p(\boldsymbol{a}|\boldsymbol{x})\}$
violate maximally the Bell inequality (\ref{BKPNMd}),
then the \textit{dits} observed by the parties are perfectly random and
uncorrelated from $W$ \cite{collbeck-renner}.

Let us now show that one can amplify partially random input
bits
%
%
to almost perfectly
random \textit{dits} by using QC that produce
arbitrarily high violation of $I_{\mathsf{A}}^{N,M,d}$.
%
%
To generate one of the $M$ measurement settings, each party uses
its SV source $r=\lceil\log_2M\rceil$ times. Hence for any
$\boldsymbol{x}$, $Q_r(\boldsymbol{x})\leq
[(1+2\varepsilon)/(1-2\varepsilon)]^{Nr}$ (cf. Ref.
\cite{collbeck-renner}). Then, there is a state
and measurement settings \cite{BKP,rodrigo} such that for large
$M$,
\begin{equation}\label{viol}
I^{N,M,d}_{\mathsf{A}}\approx
\lambda(d)/M\leq \lambda(d)/2^{r-1},
\end{equation}
where $\lambda(d)$ is a function of $d$. After
plugging everything into (\ref{VIB}), one checks that its
r.h.s. tends to zero for $M\to\infty$ iff
$\varepsilon<\varepsilon_N:=(2^{1/N}-1)/[2(2^{1/N}+1)]$. As a
result, QC violating (\ref{viol}) can be used to amplify
randomness of any $\varepsilon$-source provided
$\varepsilon<\varepsilon_N$. In particular, for $N=2$, the above
reproduces the value $\varepsilon_2=(\sqrt{2}-1)^2/2$ found in
\cite{collbeck-renner}, and, because $\varepsilon_N$ is a strictly
decreasing function of $N$, the larger $N$, the lower the critical
epsilon $\varepsilon_N$ for this method to work. Notice, however,
that $\varepsilon_N$ is independent of $d$, so almost perfectly
random \textit{dits} are obtained from partially random bits. This
means that using the setup from Ref. \cite{collbeck-renner} we can
in fact achieve both amplification and expansion of randomness
simultaneously.

Recently, with the same Bell inequality but for $N=d=2$, the
critical epsilon was shifted from $\varepsilon_2\approx0.086$ to
$\varepsilon_2'\approx 0.0961$ \cite{andrzej}. We will now show
that by using a slightly different approach the critical epsilon
can be almost doubled. To this end, we exploit the fact that only
$2M^{N-1}$ measurement settings out of all possible $M^N$ appear in
$I_{\mathsf{A}}^{N,M,d}$. However, to generate them a \textit{common} source has to be used. Assuming then that this is the case, $R=\log_2(2M^{N-1})=1+(N-1)r$ (instead of $Nr$) uses of the SV source are enough to generate all measurement settings in $I_{\mathsf{A}}^{N,M,d}$.
%
%
Thus, $Q_r(\boldsymbol{x})\leq
[(1+2\varepsilon)/(1-2\varepsilon)]^{1+(N-1)r}$, which together
with (\ref{viol}) imply that the right-hand side of (\ref{VIB})
vanishes for $M\to\infty$ iff $\varepsilon<\varepsilon_N''=(2^{1/(N-1)}-1)/[2(2^{1/(N-1)}+1)]$,
and in particular $\varepsilon''_2=1/6>\varepsilon_2'$.

\textit{Conclusions.}
We have presented a novel class of monogamy relations, obeyed by
any nonsignalling physical theory. They tightly relate the amount
of nonlocality, as quantified by the violation of Bell
inequalities \cite{BKP,rodrigo}, that $N$ parties have generated
in an experiment to the classical correlations an external party
can share with outcomes of any measurement performed by the
parties. Such trade--offs find natural applications in
device-independent protocols and here we have discussed how they
apply in quantum key distribution (cf. also Ref. \cite{2prot}) and
generation and amplification of randomness. We have finally showed
that bipartite quantum correlations allow one to amplify
$\varepsilon$--free $d$its for any $\varepsilon<1/6$.

Our results provoke further questions. First, it is natural to ask
if analogous monogamies hold for quantum correlations, and, in
fact, such elemental monogamies can be derived in the simplest
(3,2,2) scenario (see \cite{supplement}). From a more fundamental
perspective, it is of interest to understand what is the (minimal)
set of of monogamy relations generating
the same set of multipartite correlations as the no-signalling
principle.


\textit{Acknowledgments.} Discussions with Gonzalo De La Torre are
gratefully acknowledged. This work is supported by NCN grant
2013/08/M/ST2/00626, FNP TEAM, EU project SIQS, ERC grants QITBOX, QOLAPS and
QUAGATUA, the
Spanish project Chist-Era DIQIP. This publication was made possible
through the support of a grant from the
John Templeton Foundation. R. A. also
acknowledges the Spanish MINECO for the support through the Juan
de la Cierva program.

\section{appendices}

Here we present detailed proofs of Theorems 1, 2, and 3 of the
main text. Also, in the simplest $(3,2,2)$ scenario we provide
elemental monogamies for quantum correlations.

\section{Appendix A: Monogamy relations}

\subsection{Monogamy relations for nonsignalling correlations}
\label{detaliczne}

Let us start with a simple fact. Recall for this purpose that
$\langle \Omega\rangle$ is the standard mean value of a random
variable $\Omega$, that is, $\langle \Omega\rangle=\sum_{i=1}^{d-1}iP(\Omega=i)$
and
$[\Omega]$ stands for $\Omega$ modulo $d$.

\begin{faktt}\label{fact1}
It holds that for any random variable $\Omega$,
\begin{eqnarray}
\hspace{-1cm}&(a)&\hspace{1cm}\sr{\Omega}+\sr{-\Omega-1}=d-1,\\
\hspace{-1cm}&(b)&\hspace{1cm}\sr{\Omega}+\sr{-\Omega}=d[1-p([\Omega]=0)].
\end{eqnarray}
\end{faktt}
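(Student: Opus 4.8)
The plan is to reduce both identities to deterministic (pointwise) relations between the modular reductions $[\Omega]$, $[-\Omega]$, and $[-\Omega-1]$, and then to take expectations using linearity. First I would observe that, by the definition recalled just above the statement, $\sr{\Omega}=\langle[\Omega]\rangle=\sum_{i=1}^{d-1}i\,P([\Omega]=i)$ is nothing but the expectation of the mod-$d$ reduction $[\Omega]$, a random variable taking values in $\{0,\ldots,d-1\}$ (the $i=0$ term may be dropped without changing the sum). It therefore suffices to verify each claimed equality termwise in the value $k:=[\Omega]\in\{0,\ldots,d-1\}$ and then average.

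For part (a), I would express $[-\Omega-1]$ as a function of $k$. Since $-k-1\equiv d-1-k\pmod d$ and $0\le d-1-k\le d-1$, the canonical representative is exactly $[-\Omega-1]=d-1-[\Omega]$, valid for every $k$ including the endpoints $k=0$ and $k=d-1$. Hence the pointwise identity $[\Omega]+[-\Omega-1]=d-1$ holds for every realization. Taking expectations of both sides, using linearity on the left and the fact that the right-hand side is constant, yields $\sr{\Omega}+\sr{-\Omega-1}=d-1$.

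For part (b), the only delicate point is the boundary case $k=0$, and this is where I would be most careful. When $k\ge 1$ one has $[-\Omega]=d-k$, so $[\Omega]+[-\Omega]=d$; but when $k=0$ one has $[-\Omega]=0$, so the sum is $0$ rather than $d$. These two cases combine into the single pointwise identity $[\Omega]+[-\Omega]=d\,\mathbbm{1}\{[\Omega]\neq 0\}$. Averaging over $\Omega$ and again invoking linearity gives $\sr{\Omega}+\sr{-\Omega}=d\,P([\Omega]\neq 0)=d[1-p([\Omega]=0)]$, which is the assertion.

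I do not anticipate a genuine obstacle here: both statements are elementary consequences of modular arithmetic once one recognizes $\sr{\cdot}$ as an ordinary expectation and the mod-$d$ sums as deterministic. The single place demanding attention is the case distinction at $k=0$ in (b), since the naive relation $[-k]=d-k$ fails there; handling that endpoint correctly is precisely what produces the indicator, and hence the extra term $p([\Omega]=0)$ that distinguishes (b) from the constant answer in (a).
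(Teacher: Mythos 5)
Your proof is correct and follows essentially the same route as the paper's: both arguments rest on the pointwise modular identities ($[\Omega]+[-\Omega-1]=d-1$, and the relation between $[\Omega]$ and $[-\Omega]$) and then average, the only difference being that you invoke linearity of expectation on the pointwise identity where the paper re-indexes the defining sums $\sum_i i\,p([\Omega]=i)$. If anything, your handling of part (b) via the indicator $[\Omega]+[-\Omega]=d\,\mathbbm{1}\{[\Omega]\neq 0\}$ is more careful than the paper's stated fact that $[\Omega]+[-\Omega]=d$, which is pointwise false at $[\Omega]=0$; the paper's computation survives only because its sums run from $i=1$, which implicitly restricts to the nonzero case.
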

\begin{proof}Both equations follow from the very definition of
$\sr{\cdot}$. To prove (a) we notice that $[-\Omega-1]+[\Omega]=d-1$,
and hence
\begin{eqnarray}
\sr{-\Omega-1}&\nms=\nms&\sum_{i=1}^{d-1}i p([\Omega]=d-i-1)\nonumber\\
&\nms=\nms&\sum_{i=0}^{d-2}(d-i-1)p([\Omega]=i)\nonumber\\
&\nms=\nms&(d-1)\sum_{i=0}^{d-2}p([\Omega]=i)-\sum_{i=0}^{d-2}iP([\Omega]
=i)\nonumber\\
&\nms=\nms&(d-1)\sum_{i=0}^{d-1}p([\Omega]=i)-\sr{\Omega}\nonumber\\
&\nms=\nms&(d-1)-\sr{\Omega},
\end{eqnarray}
where the second equality is a consequence of changing of the
summation index, the fourth one stems from the definition of
$\sr{\Omega}$ and rearranging terms, and the last equality follows from
normalization.

To prove (b), we write
\begin{eqnarray}
\sr{\Omega}+\sr{-\Omega}&=&\sum_{i=1}^{d-1}i[p([\Omega]=i)+p([-\Omega]=i)]
\nonumber\\
&=&\sum_{i=1}^{d-1}i[p([\Omega]=i)+p([\Omega]=d-i)]\nonumber\\
&=&\sum_{i=1}^{d-1}i p([\Omega]=i)+\sum_{i=1}^{d-1}(d-i)p([\Omega]=i)\nonumber\\
&=&d\sum_{i=1}^{d-1}p([\Omega]=i)\nonumber\\
&=&d[1-p([\Omega]=0)],
\end{eqnarray}
where the second equality is a consequence of the fact that
$[\Omega]+[-\Omega]=d$,
%
while the third equality follows from shifting
of the summation index in the second sum.
\end{proof}
Let us now move to the proofs of the monogamy relations. In the
tripartite case we make use of the Barrett, Kent, and Pironio
(BKP) \cite{BKP} inequality
\begin{equation}
\label{BKP-2d}
I^{2,M,d}_{AB}=\sum_{\alpha=1}^{M}\left(\sr{A_{\alpha}-B_{\alpha}}
+\sr{B_{\alpha}-A_{\alpha+1}}\right)\geq d-1,
\end{equation}
where the convention that $X_{M+1}=[X_1+1]$ is assumed.

\setcounter{thmm}{0}

\begin{thmm}\label{thm:BKP2d:app}
For any three-partite nonsignalling correlations $\{p(a,b,c|x,y,z)\}$
with $M$ measurements and $d$ outcomes per site and any pair $\{i,j\}$
$(i,j=1,\ldots,M)$, the following inequality
\begin{equation}\label{thmm1:0}
I^{2,M,d}_{AB}+\sr{X_i-C_j}+\sr{C_j-X_i}\geq d-1
\end{equation}
is satisfied with $X$ denoting either $A$ or $B$.
\end{thmm}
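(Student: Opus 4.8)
The plan is to reduce the three-party statement to a single-party marginal bound and then prove that bound by a gluing-plus-telescoping argument, closing it with Fact~\ref{fact1}(a). First I would rewrite the claim in outcome form: by Fact~\ref{fact1}(b), $\sr{X_i-C_j}+\sr{C_j-X_i}=d[1-p(X_i=C_j)]$, so (\ref{thmm1:0}) is equivalent to $I^{2,M,d}_{AB}+1\geq d\,p(X_i=C_j)$. Since $\sr{\cdot}$, and hence $I^{2,M,d}_{AB}$, is linear in the underlying distribution, I would condition on the outcome of $C_j$: nonsignalling guarantees that for each value $c$ the conditional box $p(a,b|x,y,C_j=c)$ is again nonsignalling on $AB$, and that $p(a,b|x,y)=\sum_c p(C_j=c)\,p(a,b|x,y,C_j=c)$. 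Averaging the per-$c$ inequalities with weights $p(C_j=c)$ then reduces the theorem to the bipartite statement $I^{2,M,d}_{AB}+\sr{X_i-c}+\sr{c-X_i}\geq d-1$ for an arbitrary \emph{constant} target value $c$, i.e. to the single-outcome bound $p(X_i=c)\leq \tfrac1d(1+I^{2,M,d}_{AB})$.

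For the bipartite bound I would exploit the chained structure of (\ref{BKP-2d}). Each summand $\sr{A_\alpha-B_\alpha}$ or $\sr{B_\alpha-A_{\alpha+1}}$ is an expectation over a genuine two-party joint, and consecutive summands share a single-party measurement whose marginal is fixed by nonsignalling. I would therefore glue these pairwise joints into one Markov coupling $\Pi$ along the chain $A_1,B_1,A_2,\dots,B_M,A_{M+1}$, so that every chain term equals its $\Pi$-expectation. Taking $X_i=A_1$ (the general $i$ and the case $X=B$ follow by cyclically relabelling the chain), the pointwise subadditivity of $[\,\cdot\,]$ modulo $d$ gives $[c-A_{M+1}]\leq[c-A_1]+\sum_\alpha([A_\alpha-B_\alpha]+[B_\alpha-A_{\alpha+1}])$; taking $\Pi$-expectations yields $\sr{c-A_{M+1}}\leq \sr{c-A_1}+I^{2,M,d}_{AB}$. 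Because $A_{M+1}=[A_1+1]$ and $c$ is constant, the left-hand expectation depends only on the marginal of $A_1$, which the gluing preserves, so it equals the true $\sr{c-A_1-1}$. Finally Fact~\ref{fact1}(a) with $\Omega=A_1-c$ gives $\sr{A_1-c}+\sr{c-A_1-1}=d-1$, and substituting turns the telescoped inequality into $I^{2,M,d}_{AB}+\sr{A_1-c}+\sr{c-A_1}\geq d-1$, as required.

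The step I expect to be delicate is the telescoping, precisely because the analogous manipulation applied to the two \emph{ends} of the chain would ``prove'' $I^{2,M,d}_{AB}\geq d-1$, which is false for nonsignalling correlations (they attain $I^{2,M,d}_{AB}=0$). The point is that a glued coupling does not force the fictitious endpoint $A_{M+1}$ to equal the physical $[A_1+1]$, so one may not identify $\sr{\cdot}$ with a genuine two-variable box quantity at both ends simultaneously. The reduction to a constant $c$ is exactly what circumvents this: with $c$ deterministic, the only quantity read off at the far end, $\sr{c-A_{M+1}}$, depends on a single marginal and is therefore legitimate, while the shift $+1$ encoded in $A_{M+1}=[A_1+1]$ is what ultimately supplies the $d-1$ through Fact~\ref{fact1}(a). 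Checking that conditioning preserves nonsignalling on $AB$, and that the cyclic relabelling covers $X=B$ and every $i$, are the remaining routine verifications.
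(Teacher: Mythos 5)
Your proposal is correct, but it takes a genuinely different route from the paper's proof. The paper proceeds purely algebraically: it adds to $I^{2,M,d}_{AB}+\sr{A_i-C_j}+\sr{C_j-A_i}$ a judiciously chosen ``zero expression'' (\ref{zero_expr}) built from the identity of Fact~\ref{fact1}(a), regroups the result [Eq.~(\ref{Saura})] as a sum of $M$ Bell expressions of the type $I^{2,2,d}$ distributed among three parties so that Bob and Charlie each hold a single observable, and then invokes the result of Ref.~\cite{ns-mon} that the nonsignalling minimum of each such distributed expression equals its classical bound $d-1$; the claim follows from $M(d-1)-(M-1)(d-1)=d-1$. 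You instead (i) condition on the outcome of $C_j$ --- correctly noting that nonsignalling makes each conditional $AB$ box nonsignalling, that the conditional Bell values average to the observed one, and that this reduces the theorem to a bipartite bound against a deterministic target $c$, equivalently the guessing-probability bound $p(X_i=c)\leq(1+I^{2,M,d}_{AB})/d$ --- and (ii) prove that bound from scratch by gluing the pairwise joints along the chain into a Markov coupling, telescoping with the subadditivity of $[\,\cdot\,]$, and closing with Fact~\ref{fact1}(a). Your route is self-contained: where the paper outsources the key inequality to \cite{ns-mon}, your coupling-plus-telescoping step effectively reproves it; moreover, your reduction makes explicit that the monogamy (\ref{thmm1:0}) and the guessing-probability bound (\ref{guessing}) of the main text are equivalent, whereas the paper only derives the latter from the former. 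What the paper's regrouping buys in exchange is direct compatibility with the recursive structure of $I^{N,M,d}_{\mathsf{A}}$, which drives the inductive proof of Theorem~\ref{thm:BKPNMd:app}; your conditioning reduction would still need the $N$-partite guessing bound as separate input there. Finally, your diagnosis of the delicate point is exactly right: the Markov coupling preserves only the pairwise marginals of consecutive chain variables, so one may legitimately read off the single-variable marginal of the fictitious endpoint $A_{M+1}$ (all your argument uses), but not its joint with $A_1$ --- which is precisely what blocks the naive closed-chain argument that would yield the false statement $I^{2,M,d}_{AB}\geq d-1$.
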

\begin{proof}Let us start with the case of $X=A$
and then notice that for a random variable $\Omega$ it holds that
$\sr{\Omega}+\sr{-\Omega-1}=d-1$ (see Fact \ref{fact1}).
Consequently,
\begin{equation}\label{zero_expr}
\hspace{-0.3cm}\sum_{\substack{\beta=1\\ \beta\neq
i}}^{M}(\sr{C_j-A_{\beta}-1} +\sr{A_{\beta}-C_j})-(M-1)(d-1)
\end{equation}
is equal to zero. The fact that for any $\beta$ and $j$ it holds
that $\sr{C_j-A_{\beta}-1}
+\sr{A_{\beta}-C_j}=d-1=\sr{A_{\beta}-C_j-1} +\sr{C_j-A_{\beta}}$
allows us to rewrite (\ref{zero_expr}) in the following way
\begin{eqnarray}\label{thmm1:2}
&&\sum_{\beta=1}^{i-1}(\sr{C_j-A_{\beta}-1}
+\sr{A_{\beta+1}-C_j})\nonumber\\
&&+\hspace{-0.2cm}\sum_{\beta=i+1}^{M}(\sr{A_{\beta}-C_j-1}+\sr{C_j-A_{\beta}}
)-(M-1)(d-1).\nonumber\\
\end{eqnarray}
Then, by adding $\sr{A_i-C_j}+\sr{C_j-A_i}$ to both sides of the
above and rearranging some terms in the resulting expression, one obtains %
\begin{eqnarray}\label{thmm1:3}
&&\hspace{-0.5cm}\sr{A_i-C_j}+\sr{C_j-A_i}\nonumber\\
&&\hspace{-0.3cm}=\sum_{\beta=1}^{i-1}(\sr{C_j-A_{\beta}-1}+\sr{A_{\beta+1}-C_j}
)\nonumber\\
&&\hspace{-0.1cm}+\sum_{\beta=i}^{M-1}(\sr{A_{\beta+1}-C_j-1}+\sr{C_j-A_{\beta}}
)\nonumber\\
&&\hspace{-0.1cm}+\sr{A_1-C_j}+\sr{C_j-A_M}-(M-1)(d-1).
\end{eqnarray}
In an analogous way, we may decompose $I_{AB}^{2,M,d}$:
\begin{eqnarray}\label{thmm1:4}
I_{AB}^{2,M,d}&=&\sum_{\alpha=1}^{i-1}(\sr{A_{\alpha}-B_{\alpha}}+\sr{B_{\alpha}
-A_{\alpha+1}})\nonumber\\
&&+\sum_{\alpha=i}^{M-1}(\sr{A_{\alpha}-B_{\alpha}}+\sr{B_{\alpha}-A_{\alpha+1}}
)\nonumber\\
&&+\sr{A_M-B_M}+\sr{B_M-A_1-1}.
\end{eqnarray}
In the last step of these manipulations, we add line by line Eqs.
(\ref{thmm1:3}) and (\ref{thmm1:4}) in order to finally obtain
\begin{widetext}
\begin{eqnarray}\label{Saura}
I_{AB}^{2,M,d}+\sr{A_i-C_j}+\sr{C_j-A_i}&=&\sum_{\alpha=1}^{i-1}
(\sr{C_j-A_{\alpha}-1}+\sr{A_{\alpha}-B_{\alpha}}+\sr{B_{\alpha}-A_{\alpha+1}}
+\sr{A_{\alpha+1}-C_j})\nonumber\\
&&+\sum_{\alpha=i}^{M-1}(\sr{C_j-A_{\alpha}}+\sr{A_{\alpha}-B_{\alpha}}+\sr{B_{
\alpha}-A_{\alpha+1}}+\sr{A_{\alpha+1}-C_j-1})\nonumber\\
&&+\sr{C_j-A_M}+\sr{A_M-B_M}+\sr{B_M-A_1-1}+\sr{A_1-C_j}\nonumber\\
&&-(M-1)(d-1).
\end{eqnarray}
\end{widetext}
What we have arrived at is basically the sum of $M$ Bell
expressions $I^{2,2,d}$ but `distributed' among three parties in
such a way that Bob and Charlie measure only a single observable.
It was shown in \cite{ns-mon} that the minimal value  such an
expression can achieve over nonsignalling correlations is
precisely its classical bound $d-1$. As a
result, $I^{2,M,d}_{AB}+\sr{A_i-C_j}+\sr{C_j-A_i}\geq
M(d-1)-(M-1)(d-1)=d-1$, finishing the proof for the case
$X=A$.

If $X=B$ in Ineq. (\ref{thmm1:0}), then it suffices to rewrite the
Bell expression from (\ref{BKP-2d}) as
\begin{equation}\label{Freddie}
I^{2,M,d}_{AB}=\sum_{\alpha=1}^{M}(\sr{B_{\alpha}-A_{\alpha+1}}
+\sr{A_{\alpha+1}-B_{\alpha+1}}),
\end{equation}
add to it the zero expression (\ref{zero_expr}) with $A$ replaced by $B$,
and repeat the above manipulations. This completes the proof.
\end{proof}
Now let us move to the general $(N,M,d)$ scenario. The inequality
of interest is now the one from Ref. \cite{rodrigo}, namely:
\begin{eqnarray}\label{BKPNMd}
I^{N,M,d}_{\mathsf{A}}=\frac{1}{M}\sum_{\alpha_{N-1}=1}^{M}I_{A^{(1)}\ldots
A^{(N-1)}}^{N-1,M,d}(\alpha_{N-1})\circ A^{(N)}_{\alpha_{N-1}}\nonumber\\
&&\hspace{-1.8cm}\ge d-1.
\end{eqnarray}
where $\mathsf{A}=A^{(1)}\ldots
A^{(N)}$. The notation $\circ A^{(i)}_\gamma$ means insertion of
$A^{(i)}_\gamma$ to the average $\langle\cdot\rangle$ with the
opposite sign to the one of $A^{(i-1)}_{\delta}$ with any
$\gamma,\delta$, while $I_{A^{(1)}\ldots
A^{(N-1)}}^{N-1,M,d}(\alpha_{N-1})$ is the same Bell expression as
in (\ref{BKPNMd}), but for $N-1$ parties, and with observables of
the last party relabeled as $\alpha_{N-2}\to
\alpha_{N-2}+\alpha_{N-1}-1$ with $\alpha_N=1,\ldots,M$.
\begin{thmm}\label{thm:BKPNMd:app}
For any $(N+1)$-partite nonsignalling correlations
$\{p(\boldsymbol{a}|\boldsymbol{x})\}$ with
$M$ $d$-outcome measurements per site,
the following inequality
\begin{equation}\label{thm:BKPNMd:app1}
I^{N,M,d}_{\mathsf{A}}+\sr{A_{x_k}^{(k)}-A^{(N+1)}_{x_{N+1}}}+\sr{A^{(N+1)}_{x_{
N+1}}-A_{x_k}^{(k)}}\geq
d-1
\end{equation}
is satisfied for any $x_k,x_{N+1}=1,\ldots,M$ and  $k=1,\ldots,N$.
\end{thmm}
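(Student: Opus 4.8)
The plan is to proceed by induction on the size $N$ of the Bell group, using the recursive definition of the Bell expression in Eq.~(\ref{BKPNMd}). The base case is $N=2$, which is exactly Theorem~\ref{thm:BKP2d:app} (with $A^{(3)}$ in the role of $C$), so it remains to perform the inductive step: granting the monogamy relation for every NC of $N$ parties (an $(N-1)$-party Bell group together with one external party), I would establish (\ref{thm:BKPNMd:app1}) for $N+1$ parties.

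The first move is to rewrite $I^{N,M,d}_{\mathsf{A}}$, via Eq.~(\ref{BKPNMd}), as the average over $\alpha=1,\ldots,M$ of the inner objects $I^{N-1,M,d}_{A^{(1)}\ldots A^{(N-1)}}(\alpha)\circ A^{(N)}_{\alpha}$, in which the outermost party $A^{(N)}$ carries a single observable. Since the connection term $\sr{A_{x_k}^{(k)}-A^{(N+1)}_{x_{N+1}}}+\sr{A^{(N+1)}_{x_{N+1}}-A_{x_k}^{(k)}}$ does not depend on $\alpha$, the full left-hand side of (\ref{thm:BKPNMd:app1}) is the $\alpha$-average of each inner object augmented by this same term, so it suffices to bound every summand below by $d-1$; the average of quantities not smaller than $d-1$ is then itself not smaller than $d-1$. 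For fixed $\alpha$ I would treat the summand exactly as in the tripartite proof: I add the appropriate zero expression modelled on (\ref{zero_expr}), built from the identity $\sr{\Omega}+\sr{-\Omega-1}=d-1$ of Fact~\ref{fact1}(a), and telescope so that every complementary pair of terms collapses to its classical value $d-1$. What remains is a sum of elementary two-party chained expressions $I^{2,2,d}$ distributed among the $N+1$ parties with single observables on all auxiliary sites; the attached $(N-1)$-party core is controlled by the inductive hypothesis, while each residual elementary block has minimal nonsignalling value equal to its classical bound $d-1$ by \cite{ns-mon}. Counting the blocks against the subtracted zero expression returns $d-1$, as required. The dependence on $k$ I would absorb uniformly, generalizing the passage from $X=A$ to $X=B$ in Theorem~\ref{thm:BKP2d:app}, by first rewriting the recursive Bell expression cyclically so that $A^{(k)}$ sits in the position at which the link to $A^{(N+1)}$ attaches, after which the reduction is verbatim.

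The step I expect to be the genuine obstacle is combinatorial rather than conceptual: one must match the sign convention of the insertion $\circ A^{(N)}_{\alpha}$, together with the observable relabeling $\alpha_{N-2}\to\alpha_{N-2}+\alpha_{N-1}-1$ encoded in Eq.~(\ref{BKPNMd}), against the two orderings $\sr{A_{x_k}^{(k)}-A^{(N+1)}_{x_{N+1}}}$ and $\sr{A^{(N+1)}_{x_{N+1}}-A_{x_k}^{(k)}}$ of the connection term, so that Fact~\ref{fact1}(a) telescopes precisely the intended pairs and leaves a manifestly nonnegative collection of $I^{2,2,d}$ blocks with no stray terms. Verifying that this alignment, and the single-observable bound of \cite{ns-mon}, go through uniformly for every $k=1,\ldots,N$ and every $\alpha$ is the crux; everything else merely replays the tripartite computation one recursion layer at a time.
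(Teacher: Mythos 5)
Your overall skeleton --- induction on $N$ through the recursive formula (\ref{recursive:app}), with the reduction to showing that each fixed-$\alpha$ summand plus the connection term is bounded below by $d-1$, exactly as in (\ref{Queen}) --- coincides with the paper's proof. But the way you propose to bound each summand is missing the one idea that makes the paper's inductive step work. In the paper, for fixed $\alpha_{N-1}$ the party $A^{(N)}$ contributes only the single observable $A^{(N)}_{\alpha_{N-1}}$, and by the sign rule of the insertion $\circ$ it always appears with the sign opposite to that of $A^{(N-1)}$; hence the two can be \emph{fused} into one composite $d$-outcome variable $A^{(N-1)}_{\beta}-A^{(N)}_{\alpha_{N-1}}$ (everything being modulo $d$) held by a single effective party. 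After this fusion the summand literally \emph{is} a relabelled $(N-1)$-partite Bell expression of the same family, so the inductive hypothesis applies to it verbatim: no zero expression, no telescoping, and no appeal to \cite{ns-mon} occurs anywhere in the inductive step (those tools belong only to the base case, Theorem \ref{thm:BKP2d:app}). Your plan instead re-runs the tripartite telescoping on the $N$-party summand and asserts the result is ``an $(N-1)$-party core controlled by the inductive hypothesis'' together with ``residual $I^{2,2,d}$ blocks'' bounded via \cite{ns-mon}. No decomposition of this hybrid shape arises: without the fusion, the inductive hypothesis --- a statement about $(N-1)$-party Bell expressions --- cannot be applied to any piece of the $N$-party summand, and with the fusion there is nothing left to telescope. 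Tellingly, you yourself flag the ``combinatorial'' matching of the $\circ$ sign convention and the relabelings as the genuine obstacle and leave it unresolved; that obstacle is precisely where the fusion observation does all the work, so the inductive step is not actually established by your argument.

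A second, smaller gap concerns the range of $k$. You propose to ``rewrite the recursive Bell expression cyclically so that $A^{(k)}$ sits at the attachment point,'' but the inequality of Ref.~\cite{rodrigo} is not stated (or proved) to have such a cyclic symmetry, and the paper does not use one. Instead, the different values of $k$ are reached by choosing \emph{which} neighbour to fuse $A^{(N)}$ with --- for $N=3$, fusing parties $2$ and $3$ gives $k=1$, while fusing parties $1$ and $3$, starting from the alternative form (\ref{Freddie}), gives $k=2$ --- and the remaining case $k=N$ is obtained from the specific, proven invariance of $I^{N,M,d}_{\mathsf{A}}$ under the exchange of the last and the $(N-2)$th party (for $N=3$, the first and the third). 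Replacing these concrete symmetry arguments by an assumed cyclic relabelling leaves that part of the claim unsupported as well.
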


\begin{proof}The recursive formula in Ineq. (\ref{BKPNMd}),
which for convenience we restate here
\begin{equation}\label{recursive:app}
I^{N,M,d}_{\mathsf{A}}=\frac{1}{M}\sum_{\alpha_{N-1}=1}^{M}I_{A^{(1)}\ldots
A^{(N-1)}}^{N-1,M,d}(\alpha_{N-1})\circ A^{(N)}_{\alpha_{N-1}},
\end{equation}
allows us to demonstrate the theorem inductively. The case of
$N=2$ has already been proved as Theorem \ref{thm:BKP2d:app}, so
we consider $N=3$. Exploiting Eq. (\ref{recursive:app}), one can
express $I^{3,M,d}_{A^{(1)}A^{(2)}A^{(3)}}$ as
\begin{equation}\label{thm:BKPNMd:app2}
I^{3,M,d}_{A^{(1)}A^{(2)}A^{(3)}}=\frac{1}{M}\sum_{\alpha_2=1}^{M}I^{2,M,d}_{A^{
(1)}A^{(2)}}(\alpha_2)\circ
A^{(3)}_{\alpha_2}.
\end{equation}
It is clear that for every $\alpha_2=1,\ldots,M$
\begin{eqnarray}\label{Almodovar}
\hspace{-0.2cm}I^{2,M,d}_{A^{(1)}A^{(2)}}(\alpha_2)&=&
\sum_{\alpha_1=1}^{M}(\sr{A^{(1)}_{\alpha_1}-A^{(2)}_{\alpha_1+\alpha_2-1}}\\
&&\hspace{1cm}+\sr{A^{(2)}_{\alpha_1+\alpha_2-1}-A^{(1)}_{\alpha_1+1}})\geq
d-1\nonumber
\end{eqnarray}
is a Bell inequality equivalent to (\ref{BKP-2d}), in which the
observables of the second party $A^{(2)}$ have been relabelled
according to $\alpha_1\to \alpha_1+\alpha_2-1$. It must then
fulfil the monogamy relations (\ref{thmm1:0}) (with $N=2$)
independently of the value of $\alpha_2$. In order to see it in a
more explicit way, let us consider the case $k=1$, and in Eq.
(\ref{Saura}) just rename $A\to A^{(1)}$, $B\to A^{(2)}$, and
$C\to A^{(3)}$, and also $\alpha\to \alpha_1$ for the first party,
while $\alpha\to \alpha_1+\alpha_2-1$ for the second one. Then,
for those observables $A^{(2)}_{\alpha_1+\alpha_2-1}$ for which $
\alpha_1+\alpha_2-1>M$ we use the rule $X_{i\times
M+\gamma}=[X_{\gamma}+i]$ to get $[A_{\gamma}^{(2)}+i]$ with some
$\gamma$ and $i$, and later replace the latter by another variable
$\widetilde{A}^{(2)}_{\gamma}$ (this is just $A_{\gamma}^{(2)}$
with outcomes shifted by a constant). With the aid of formula
(\ref{Freddie}) the same reasoning can be repeated for $k=2$.

Now, we prove that each term in Eq. (\ref{thm:BKPNMd:app2})
fulfills (\ref{thm:BKPNMd:app1}) for $N=3$, that is that the inequalities
\begin{equation}\label{Queen}
I^{2,M,d}_{A^{(1)}A^{(2)}}(\alpha_2)\circ A^{(3)}_{\alpha_2}
+\sr{A_{x_k}^{(k)}-A^{(4)}_{x_4}}+\sr{A^{(4)}_{x_4}-A_{x_k}^{(k)}}\geq
d-1
\end{equation}
hold for any $\alpha_2=1,\ldots,M$, any pair $x_k,x_4=1,\ldots,M$,
and any $k=1,2,3$.

First assume $k=1$. Let us write explicitly
$I^{2,M,d}_{A^{(1)}A^{(2)}}(\alpha_2)\circ A^{(3)}_{\alpha_2}$ as
\begin{eqnarray}\label{rownanie}
&&\hspace{-0.1cm}I^{2,M,d}_{A^{(1)}A^{(2)}}(\alpha_2)\circ
A^{(3)}_{\alpha_2}=\sum_{\alpha_1=1}^M
(\sr{A^{(1)}_{\alpha_1}-A^{(2)}_{\alpha_1+\alpha_2-1}+A^{(3)}_{\alpha_2}}
\nonumber\\
&&\hspace{3.7cm}+\sr{A^{(2)}_{\alpha_1+\alpha_2-1}-A^{(1)}_{\alpha_1+1}-A^{(3)}_
{\alpha_2}}).\nonumber\\
\end{eqnarray}
For any fixed $\alpha_2$, the last party measures solely a single
observable, and therefore we treat
$A^{(2)}_{\alpha_1+\alpha_2-1}-A_{\alpha_2}^{(3)}$ as a single
variable, or, in other words, for any $\alpha_2=1,\ldots,M$,
$A^{(2)}_{\alpha_1+\alpha_2-1}-A_{\alpha_2}^{(3)}$ is a
$d$-outcome observable [recall that in Eq. (\ref{rownanie}) all
variables are modulo $d$]. Effectively, (\ref{Queen}) is a
three-partite inequality of the form (\ref{thm:BKPNMd:app1}) (with
$N=2$) that has just been proven.

In the $k=2$ case we insert the third party into the alternative
expression (\ref{Freddie}) and further apply the same reasoning as
above.

In order to show (\ref{thm:BKPNMd:app1}) for $k=3$, we use the fact that
the Bell inequality (\ref{BKPNMd}) for $N=3$ is invariant under
the exchange of the first and the third party \cite{rodrigo},
meaning that we can, analogously to Eq. (\ref{thm:BKPNMd:app2}), write it down
as
\begin{equation}\label{thm:BKPNMd:app3}
I^{3,M,d}_{A^{(1)}A^{(2)}A^{(3)}}=\frac{1}{M}\sum_{\alpha_2=1}^{M}I^{2,M,d}_{A^{
(3)}A^{(2)}}(\alpha_2)\circ
A^{(1)}_{\alpha_2}.
\end{equation}
Now, it is enough to repeat the above reasoning to complete the proof
of the monogamy relations (\ref{thm:BKPNMd:app1}) for $N=3$.

Having it proven for $N=3$, let us now assume that the theorem is
true for $N$ parties (any $N$-partite nonsignalling probability
distribution). In order to complete the proof we again refer to
the recursive formula (\ref{recursive:app}). By grouping together
the last two parties, each term in the sum in Eq.
(\ref{recursive:app}) is effectively an $(N-1)$--partite Bell
expression for which we have just assumed (\ref{thm:BKPNMd:app1})
to hold for any $x_k,x_N$ and $k=1,\ldots,N$. Performing the summation
over $\alpha_{N-1}$ and dividing further by $M^{N-2}$ we obtain
(\ref{thm:BKPNMd:app1}) for any $i,j$ and $k=1,\ldots,N-1$. The case $k=N$ can
be reached by using the fact that $I^{N,M,d}$ is invariant under exchange of the
last and the $(N-2)$th party \cite{rodrigo}.
\end{proof}


\subsection{Elemental monogamies for quantum correlations}

Let us now discuss the case of quantum correlations in which case
similar monogamy relations are also expected to hold. Their
derivation, however, is much more cumbersome and we only consider
the simplest $(3,2,2)$ scenario and derive quantum analogs of the
nonsignalling monogamies (\ref{thmm1:0}). To this end, we use a
one-parameter modification of the CHSH Bell inequality \cite{CHSH}
%
%
with the latter being a particular case of (\ref{BKP-2d}) with
$M=d=2$. Here, for convenience, we write it down in its ``standard" form:
\begin{equation}\label{CHSHalfa}
\widetilde{I}^{\alpha}_{AB}:=\alpha(\langle A_1B_1\rangle+\langle
A_1B_2\rangle)+\langle A_2B_1\rangle-\langle A_2B_2\rangle\leq
2\alpha
\end{equation}
with $\alpha\geq 1$. Here, $A_i$ and $B_i$ are local
quantum observables with eigenvalues $\pm 1$ and $\langle
XY\rangle=\Tr[\rho(X\otimes Y)]$ for some state $\rho$ and local
observables $X,Y$. Actually, one proves the following more general theorem,
generalizing the result of Ref. \cite{TV} for the Bell inequality
(\ref{CHSHalfa}).
%
%

\begin{thmm}\label{thm:Qapp}
Any three-partite quantum correlations with two dichotomic
measurements per site must satisfy the following inequalities
\begin{eqnarray}\label{in1}
&&\alpha^2\max\{(\widetilde{I}^{\alpha}_{AB})^2,(\widetilde{I}^{\alpha}_{AC})^2
\}
+\min\{(\widetilde{I}^{\alpha}_{AB})^2,(\widetilde{I}^{\alpha}_{AC})^2\}
\nonumber\\
&&\hspace{4cm}\leq4\alpha^2(1+\alpha^2)
\end{eqnarray}
and
\begin{equation}\label{in2}
(\widetilde{I}_{AB}^{\alpha})^2+4\langle A_iC_j\rangle^2\leq
4(1+\alpha^2)
\end{equation}
for any $\alpha\geq 1$ and $i,j=1,2$.
\end{thmm}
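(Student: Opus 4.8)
The plan is to adapt the Toner--Verstraete argument, reducing everything to a geometric optimization over qubit correlation matrices. First I would apply Jordan's lemma: since each party uses only two $\pm1$-valued observables, the pair $\{A_1,A_2\}$ (and likewise for $B,C$) block-diagonalizes into blocks of size at most two, so without loss of generality $A_i=\hat a_i\!\cdot\!\vec\sigma$, $B_i=\hat b_i\!\cdot\!\vec\sigma$, $C_i=\hat c_i\!\cdot\!\vec\sigma$ act on qubits. Second, each left-hand side of (\ref{in1})--(\ref{in2}) is a sum of squares of functionals affine in $\rho_{ABC}$, hence convex in the state, so its maximum is attained at an extreme point and it suffices to treat pure three-qubit states. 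Finally, since $\alpha\ge1$ gives $\alpha^2\max\{p,q\}+\min\{p,q\}=\max\{\alpha^2p+q,\,p+\alpha^2q\}$ with $p=(\widetilde I^{\alpha}_{AB})^2$, $q=(\widetilde I^{\alpha}_{AC})^2$, and the setup is symmetric under $B\leftrightarrow C$, proving (\ref{in1}) reduces to the single weighted bound $\alpha^2(\widetilde I^{\alpha}_{AB})^2+(\widetilde I^{\alpha}_{AC})^2\le4\alpha^2(1+\alpha^2)$.

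Next I would introduce the correlation matrices $T^{AB}_{mn}=\langle\sigma^A_m\sigma^B_n\rangle$, $T^{AC}_{mn}=\langle\sigma^A_m\sigma^C_n\rangle$, so that $\langle A_kB_l\rangle=(T^{AB\,\top}\hat a_k)\!\cdot\!\hat b_l$. Writing $\widetilde I^{\alpha}_{AB}=\alpha(T^{AB\,\top}\hat a_1)\!\cdot\!(\hat b_1+\hat b_2)+(T^{AB\,\top}\hat a_2)\!\cdot\!(\hat b_1-\hat b_2)$ and using that $\hat b_1\pm\hat b_2$ are orthogonal with squared norms summing to $4$, two applications of Cauchy--Schwarz give the estimate
\[
(\widetilde I^{\alpha}_{AB})^2\le 4\big(\alpha^2\,|T^{AB\,\top}\hat a_1|^2+|T^{AB\,\top}\hat a_2|^2\big),
\]
and its analogue for $AC$; since every singular value of a two-qubit $T$ is at most $1$, this already reproduces the Tsirelson value $4(1+\alpha^2)$, while $\langle A_iC_j\rangle^2\le|T^{AC\,\top}\hat a_i|^2$. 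A complementary handle is the operator identity for the Bell operator $\hat I^{\alpha}_{AB}=\alpha A_1(B_1+B_2)+A_2(B_1-B_2)$ (whose expectation is $\widetilde I^{\alpha}_{AB}$), namely
\[
(\hat I^{\alpha}_{AB})^2=2(1+\alpha^2)\,\mathbbm{1}+(\alpha^2-1)\{B_1,B_2\}-\alpha\,[A_1,A_2]\otimes[B_1,B_2],
\]
with $\{B_1,B_2\}=B_1B_2+B_2B_1$; this exposes exactly where the monogamy lives, since the shared factor $[A_1,A_2]=2i(\hat a_1\times\hat a_2)\!\cdot\!\vec\sigma$ couples the $AB$ and $AC$ estimates through the single Alice axis $\hat a_1\times\hat a_2$.

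To close the bound with the correct constant I would put the pure three-qubit state in its canonical (Ac\'in) form and compute $T^{AB}$, $T^{AC}$ and the local Bloch vectors explicitly in the five state parameters, then maximize the (already reduced) left-hand sides over these parameters together with the measurement directions $\hat a_1,\hat a_2,\hat b_i,\hat c_j$. The geometric content that forces the trade-off is most transparent in the steering picture: for $A_{\hat a}=\hat a\!\cdot\!\vec\sigma$ one has $T^{AB\,\top}\hat a=p_+\vec b_+-p_-\vec b_-$, where $p_\pm$ are the outcome probabilities of $A_{\hat a}$ and $\vec b_\pm,\vec c_\pm$ are Bob's and Charlie's conditional Bloch vectors in the steered two-qubit states $\rho^\pm_{BC}$; legitimacy of $\rho^\pm_{BC}$ links $|\vec b_\pm|$ to $|\vec c_\pm|$ (tightest when $\rho^\pm_{BC}$ is pure, via the concurrence) and yields the monogamy of the tensor quantities $|T^{AB\,\top}\hat a|^2,|T^{AC\,\top}\hat a|^2$ needed to collapse the optimized expression to $4\alpha^2(1+\alpha^2)$ and to $4(1+\alpha^2)$. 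The extremal configurations -- Alice's two directions orthogonal with a maximally entangled $AB$ block and $C$ decoupled -- saturate both families, giving tightness.

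The hard part is precisely this final optimization, not the preceding algebra, because every modular shortcut overshoots. Bounding the commutator norms in the operator identity independently gives $4\alpha^2+4\alpha$ for a single pair (already exceeding $4(1+\alpha^2)$ once $\alpha>1$) and $16$ rather than $8$ for the $\alpha=1$ monogamy sum, since it ignores both the anticorrelation between $\langle\{B_1,B_2\}\rangle$ and the commutator term and the sharing of $[A_1,A_2]$; and the singular-value estimate above, while valid, is loose whenever $\hat a_1,\hat a_2$ are non-orthogonal (e.g.\ for a GHZ reduced state with $\hat a_1=\hat a_2$ it would demand an inequality that plainly fails, even though the true Bell values stay small). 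Recovering the sharp constants therefore requires genuinely exploiting the shared-Alice structure and treating all the directions simultaneously -- either by carrying the canonical-form maximization to its boundary via Lagrange multipliers, or by constructing a single tailored sum-of-squares operator decomposition that certifies (\ref{in1}) and (\ref{in2}) at once. This is the ``cumbersome'' core the text alludes to.
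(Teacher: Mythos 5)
There is a genuine gap: your proposal stops exactly where the proof has to begin. The reduction to pure qubit states, the split of (\ref{in1}) into the single weighted bound $\alpha^2(\widetilde I^{\alpha}_{AB})^2+(\widetilde I^{\alpha}_{AC})^2\le4\alpha^2(1+\alpha^2)$, the Cauchy--Schwarz estimate and the operator identity are all correct, but you concede yourself that every bound you actually derive overshoots the constants, and you replace the decisive step by a programme (``canonical-form maximization via Lagrange multipliers, or a tailored sum-of-squares decomposition''). What is missing are the two concrete facts the paper imports from Toner--Verstraete \cite{TV} (and \cite{Horodeccy,AMP}): (i) for a fixed two-qubit reduced state, the maximum of $\widetilde I^{\alpha}_{AB}$ over local real traceless observables is exactly $2\sqrt{\alpha^2\lambda_1+\lambda_2}$, with $\lambda_1\ge\lambda_2$ the eigenvalues of $T_{AB}T_{AB}^{T}$ [Eq.~(\ref{ElPrat})] --- this replaces your loose, Alice-direction-dependent estimate $4(\alpha^2|T^{\top}\hat a_1|^2+|T^{\top}\hat a_2|^2)$ by an exact formula; and (ii) for a pure real three-qubit state the matrices $T_{AB}T_{AB}^{T}$ and $T_{AC}T_{AC}^{T}$ are simultaneously diagonalizable and their eigenvalues obey the monogamy $\lambda_1+\lambda_2+\widetilde\lambda_1+\widetilde\lambda_2\le 2$ [Eq.~(\ref{CHSHmonL})]. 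Given (i) and (ii), both (\ref{in1}) and (\ref{in2}) follow by three lines of algebra [Eqs.~(\ref{girona})--(\ref{girona2})]; without them, ``maximize over the five state parameters and all measurement directions'' is a restatement of the problem, not a solution.

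Note also that nothing in your argument quantitatively couples the $AB$ and $AC$ data: the operator identity only exhibits the shared commutator $[A_1,A_2]$, and your steering paragraph merely asserts that positivity of the steered states ``links $|\vec b_\pm|$ to $|\vec c_\pm|$'' and ``yields the monogamy of the tensor quantities'' --- but that assertion \emph{is} fact (ii), i.e.\ the nontrivial core $\mathrm{Tr}(T_{AB}T_{AB}^{T})+\mathrm{Tr}(T_{AC}T_{AC}^{T})\le 2$, which holds for pure states (hence the need for the convexity/reality reduction that the paper takes from \cite{TV,Lluis}) and requires an actual proof. Until you either prove that eigenvalue monogamy or exhibit the sum-of-squares certificate you allude to, the theorem is not established; the paper's route shows that once (i) and (ii) are in hand, no further optimization is needed at all.
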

\begin{proof}The proof is nothing more but a slight modification of
the considerations of Ref. \cite{TV} (see also Ref.
\cite{Horodeccy}). Nevertheless, we attach it here for
completeness.

We start by noting that the monogamy regions, that is, the
two-dimensional sets of allowed (realizable) within quantum theory
pairs
$\{\widetilde{I}_{AB}^{\alpha},\widetilde{I}_{AC}^{\alpha}\}$ for
Ineq. (\ref{in1}) and $\{\widetilde{I}_{AB}^{\alpha},\langle
A_iC_j\rangle\}$ with fixed $i$ and $j$ for Ineq. (\ref{in2}),
must be convex. Therefore, as it is shown in Ref. \cite{TV} (see
also Ref. \cite{Lluis}), every point of their boundaries can be
realized with a real three-qubit pure state and real local
one-qubit measurements. Recall that the latter assumes the form
\begin{equation}\label{theform}
X=\boldsymbol{x}\cdot\boldsymbol{\sigma}
\end{equation}
with $\boldsymbol{x}\in\mathbbm{R}^2$ being a unit vector and
$\boldsymbol{\sigma}=[\sigma_x,\sigma_z]$ denoting a vector
consisting of the standard Pauli matrices $\sigma_x$ and
$\sigma_z$.

Then, it follows from a series of papers \cite{Horodeccy,TV,AMP}
that for a given two-qubit state $\rho_{AB}$, the maximal value of
$\widetilde{I}_{AB^{\alpha}}$ over local, real, and traceless
observables [i.e., those of the form (\ref{theform})] measured by
Alice $A_i$ and Bob $B_i$, amounts to
\begin{equation}\label{ElPrat}
\max_{A_i,B_j}(\widetilde{I}^{\alpha}_{AB})=2\sqrt{\alpha^2
\lambda_1+\lambda_2}.
\end{equation}
Here, $\lambda_i$ $(i=1,2)$ denote the eigenvalues of
$T_{AB}T_{AB}^{T}$ put in a decreasing order, i.e.,
$\lambda_1\geq\lambda_2$, and $T_{AB}$ is the following ŽreducedŽ
correlation matrix
\begin{equation}\label{matrix}
T_{AB}=\left(
\begin{array}{cc}
\langle \sigma_x\otimes\sigma_x\rangle_{AB} & \langle
\sigma_x\otimes\sigma_z\rangle_{AB}\\
\langle \sigma_z\otimes\sigma_x\rangle_{AB} & \langle
\sigma_z\otimes\sigma_z\rangle_{AB}
\end{array}
\right).
\end{equation}
We added the subscript $AB$ in (\ref{matrix}) to indicate that the
mean values are taken in the state $\rho_{AB}$. In particular, one
can similarly compute the maximal value of a single average
$\langle AB\rangle$ in the state $\rho_{AB}$ over local
observables $A$ and $B$ of the form (\ref{theform}) to be
\begin{equation}
\max_{A,B}\langle AB\rangle=\lambda_1.
\end{equation}

Equipped with these facts, we can now turn to the proof of the
inequalities (\ref{in1}) and (\ref{in2}). We start from the first
one and note that it suffices to demonstrate it in the case of
$\widetilde{I}_{AB}^{\alpha}\geq \widetilde{I}_{AC}^{\alpha}$, in
which it becomes
\begin{equation}\label{tintodeverano}
\alpha^2(\widetilde{I}_{AB}^{\alpha})^2+(\widetilde{I}_{AC}^{\alpha})^2\leq
4\alpha^2.
\end{equation}
The opposite case will follow immediately by exchanging
$B\leftrightarrow C$.

Let then $\ket{\psi_{ABC}}$ be a pure real three-qubit state. By
$\rho_{AB}$ and $\rho_{AC}$ we denote its subsystems arising by
tracing out the third and the second party, respectively, and by
$T_{AB}$ and $T_{AC}$ the corresponding correlation matrices [cf.
Eq. (\ref{matrix})]. Finally, let $\lambda_i$ and
$\widetilde{\lambda}_i$ $(i=1,2)$ be eigenvalues of
$T_{AB}T_{AB}^T$ and $T_{AC}T_{AC}^T$, respectively, where we keep
the convention that $\lambda_1\geq \lambda_2$ and
$\widetilde{\lambda}_1\geq \widetilde{\lambda}_2$. It was pointed
out in Ref. \cite{TV} that the latter matrices are diagonal in the
same basis, which allows one to simultaneously maximize both
$\widetilde{I}_{AB}^{\alpha}$ and $\widetilde{I}_{AC}^{\alpha}$
with the same observables on Alice site. This, together with Eq.
(\ref{ElPrat}), means that
\begin{eqnarray}\label{girona}
\hspace{-0.1cm}\max_{\substack{A_i,B_j,\\C_k}}\!\!\left[\alpha^2(\widetilde{I}_{
AB}^{\alpha})^2+(\widetilde{I}_{AC}^{\alpha})^2\right]
&\negmedspace=\negmedspace&
4[\alpha^2(\alpha^2\lambda_1+\lambda_2)+\alpha^2\widetilde{\lambda}_1+\widetilde
{\lambda}_2]\nonumber\\
&\negmedspace=\negmedspace&4[\alpha^4\lambda_1+\alpha^2(\lambda_2+\widetilde{
\lambda}_1)+\widetilde{\lambda}_2].\nonumber\\
\end{eqnarray}
In order to complete the proof, we make use of the
Toner-Verstraete monogamy relation for the CHSH Bell inequality
\cite{TV}, which we state here in terms of $\lambda_i$ and
$\widetilde{\lambda}_i$ as
\begin{equation}\label{CHSHmonL}
\lambda_2+\widetilde{\lambda}_1\leq
2-\lambda_1-\widetilde{\lambda}_2.
\end{equation}
When applied to (\ref{girona}), it leads us to
\begin{eqnarray}\label{girona2}
\hspace{-0.1cm}\max_{\substack{A_i,B_j,\\C_k}}\!\left[\alpha^2(\widetilde{I}_{AB
}^{\alpha})^2+(\widetilde{I}_{AC}^{\alpha})^2\right]
&\negmedspace\leq\negmedspace&4[(\alpha^2-1)(\alpha^2\lambda_1-\widetilde{
\lambda}_2)+2\alpha^2]\nonumber\\
&\negmedspace=\negmedspace&4[\alpha^2(\alpha^2-1)+2\alpha^2]\nonumber\\
&\negmedspace=\negmedspace&4\alpha^2(1+\alpha^2),
\end{eqnarray}
where the second line follows form the facts that $\lambda_1\leq
1$, $\widetilde{\lambda}_2\geq 0$, and $\alpha\geq 1$.

To prove Ineq. (\ref{in2}), we follow the above reasoning to
obtain
\begin{eqnarray}
\max_{A_i,B_j,C_l}\left[(\widetilde{I}^{\alpha}_{AB})^2+4\langle
A_kC_l\rangle^2\right]&=&4(\alpha^2\lambda_1+\lambda_2)+4\widetilde{\lambda}
_1\nonumber\\
&=&4\alpha^2\lambda_1+4(\lambda_2+\widetilde{\lambda}_1)\nonumber\\
\end{eqnarray}
for $k=1,2$. Subsequent application of (\ref{CHSHmonL}) to the
term in parentheses in the second line of the above directly gives
Ineq. (\ref{in2}), completing the proof.
\end{proof}

For $i=1$ and $j=1,2$, the relations (\ref{in2}) are tight as any pair of
values of $\widetilde{I}_{AB}^{\alpha}$ and $\langle
A_1C_j\rangle$ saturating them can be realized with the state
$(\beta_{+}\ket{01}+\beta_{-}\ket{10})\ket{0}$, where
$\beta_{\pm}=\tfrac{1}{2}(1\pm \sqrt{2}\sin\theta)^{1/2}$ and
$\theta\in[0,\pi/4]$. It is, however, no longer true for $i=2$. In
this case we numerically found tight monogamy relations for
particular values of $\alpha$ (see Fig. \ref{fig:tight}).
\begin{figure}[h!]
(a)\includegraphics[width=0.35\textwidth]{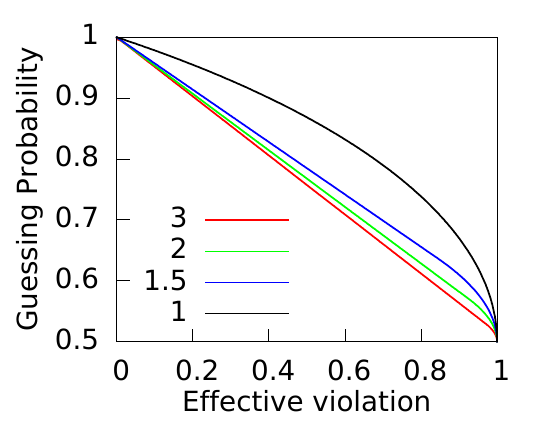}
(b)\includegraphics[width=0.35\textwidth]{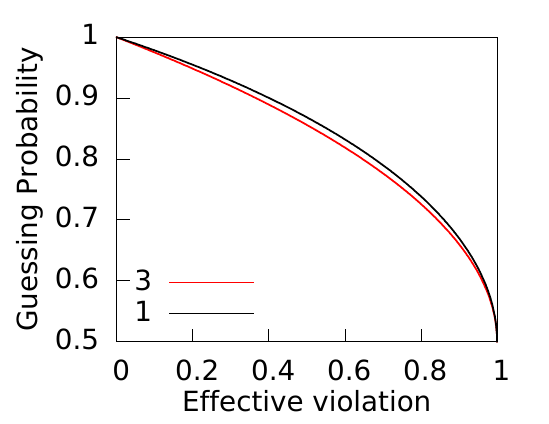}
\caption{(a) Guessing probability (and simultaneously the tight
analogs of monogamies in Theorem \ref{thm:Qapp}) for $i=2$
as a function of
$(\widetilde{I}_{AB}^{\alpha}-2\alpha)/2(\sqrt{1+\alpha^2}-\alpha)$
for various values of $\alpha$. All curves were found using two methods. First,
we maximized the guessing probability for a given
value of $\widetilde{I}_{AB}^{\alpha}$ over two-ququart states and
one-ququart dichotomic measurements. Then, we used the hierarchy
of Ref. \cite{NPA} and with its third level we arrived at curves
that coincide with those obtained with the first method with
precision $10^{-7}$. For comparison (b) presents the
corresponding nontight monogamies proven in theorem \ref{thm:Qapp} ($i=2$) for
$\alpha=1,3$ (the curves for $\alpha=1.5,2$ fall in between these
two). The black curve is the same on both plots.}\label{fig:tight}
\end{figure}


Let us finally notice that the quantum elemental monogamies
impose the following upper bounds on the guessing probability
\begin{equation}
\max_j p(X_i=j)\leq
\frac{1}{2}\{1+[1+\alpha^2-(\widetilde{I}_{AB}^{\alpha}/2)^2]^{1/2}\}
\end{equation}
with $X=A,B$, $i=1,2$, and $\alpha\geq 1$. This bound was already
derived in Ref. \cite{AMP}, and, as already said, it is tight only
for $i=1$. In the case $i=2$, we determined the tight bounds
numerically for few $\alpha$s and they are presented in Fig. \ref{fig:tight}.

%
%

\section{Appendix B: Randomness amplification}
Let us begin with recalling the description of the
Santha--Vazirani (SV) source (or the $\varepsilon$--source). Its
working is defined as follows: it produces a sequence $y_1,
y_2,\cdots, y_n$ of bits according to
\begin{equation}\label{sv}
\tfrac{1}{2}-\varepsilon \le p(y_k| w)\le \tfrac{1}{2}+\varepsilon
\quad (k=1,\ldots,n),
\end{equation}
where $w$ denotes any space-time variable that could be the cause
of $y_k$. In particular, $y_k$ can depend on $y_1,\ldots,y_{k-1}$.

Let now $W$ be any random variable used by an adversary to control
the $\varepsilon$-sources and the physical systems held by the
parties. The random variable can be thought of a device, held by a
villain $E$, with a knob that when set to a particular value $w$
of $W$ makes (i) the SV sources produce bits with certain
probabilities obeying (\ref{sv}) and (ii) the devices held by the
parties generate a concrete nonsignalling probability distribution
represented by
$\{p(\boldsymbol{a}|\boldsymbol{x},w)\}_{\boldsymbol{a},\boldsymbol{x}}$. Let us
then by $\{p(a_k,w|\boldsymbol{x})\}_{a_k,w}$ denote correlations between
outcomes obtained by party $k$ and the random variable $W$ for a
particular choice of measurement settings $\boldsymbol{x}$.
Also, let $\{\widetilde{p}(a)\}$ be the one-party uniform
probability distribution, i.e., $\widetilde{p}(a)=1/d$ for any
$a$. Introducing then the variational distance
\begin{equation}
D(\{p(x)\},\{q(x)\})=\frac{1}{2}\sum_x|p(x)-q(x)|
\end{equation}
between two probability distributions $\{p(x)\}$
and $\{q(x)\}$, we can prove the following.

\begin{thmm} Let for any $w$,
$\{p(\boldsymbol{a}|\boldsymbol{x},w)\}_{\boldsymbol{a},\boldsymbol{x}}$ be an
$N$-partite nonsignalling probability distribution. Then for any $k=1,\ldots,N$
and any choice of
measurement settings $\boldsymbol{x}$:
\begin{eqnarray}\label{GaiaNova2}
&&\hspace{-1.cm}D(\{p(a_k,w|\boldsymbol{x})\}_{a_k,w},\{\widetilde{p}
(a_k)p(w|\boldsymbol { x } )\}_{a_k,w})\nonumber\\
&&\hspace{0.5cm}=\frac{1}{2}\sum_{a_k,w}\left|p(a_k,w|\boldsymbol{x}
)-\widetilde{p}(a_k)p(w|\boldsymbol{x})\right|\nonumber\\
&&\hspace{0.5cm}\leq
\frac{(d-1)^2+1}{2d}\,Q_M(\boldsymbol{x})I^{N,M,d}_{\mathsf{A}},
\end{eqnarray}
where $I^{N,M,d}_{\mathsf{A}}$ is taken in the
probability distribution observed by the parties
$\{p(\boldsymbol{a}|\boldsymbol{x})\}$. Then
\begin{equation}
Q_M(\boldsymbol{x})=\max_{w}\left[\frac{p(w|\boldsymbol{x})}{p_{\min}(w)}
\right ],
\end{equation}
where $p_{\min}(w)=\min_{\boldsymbol{x}}\{p(w|\boldsymbol{x})\}$
with the minimum taken over all measurement settings
$\boldsymbol{x}$ appearing in the Bell inequality (\ref{BKPNMd}).

\end{thmm}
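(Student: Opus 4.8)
The plan is to reduce the joint bound to a family of single-party statements by conditioning on the adversary's variable $w$, and then to reassemble them with a ``free-will'' argument that trades the setting-dependence of $p(w|\boldsymbol{x})$ for the factor $Q_M(\boldsymbol{x})$. Writing $p(a_k,w|\boldsymbol{x})=p(a_k|\boldsymbol{x},w)\,p(w|\boldsymbol{x})$, the left-hand side of (\ref{GaiaNova2}) factorizes as $\tfrac12\sum_w p(w|\boldsymbol{x})\sum_{a_k}\bigl|p(a_k|\boldsymbol{x},w)-1/d\bigr|$, so it suffices to bound, for each fixed $w$, the distance of the single marginal $q_a:=p(a_k=a|\boldsymbol{x},w)$ from the uniform distribution, and then to control the $w$-average of the resulting conditional Bell values.

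For the per-$w$ step I would exploit that, by assumption, $\{p(\boldsymbol{a}|\boldsymbol{x},w)\}$ is nonsignalling for every $w$; hence by nonsignalling $q_a$ depends only on party $k$'s setting $x_k$, and the guessing-probability bound (\ref{guessing}) applies to this distribution with its own Bell value $I(w):=I^{N,M,d}_{\mathsf{A}}[\{p(\boldsymbol{a}|\boldsymbol{x},w)\}]\ge 0$. This gives $\max_a q_a\le (1+I(w))/d$, so the upward deviation of the peak cell is at most $I(w)/d$; normalization then forces $q_a\ge 1-(d-1)\max_a q_a\ge (1-(d-1)I(w))/d$, so every downward deviation is at most $(d-1)I(w)/d$. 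Bounding $\sum_a|q_a-1/d|$ term by term — the single peak cell contributing at most $I(w)/d$ and each of the remaining $d-1$ cells at most $(d-1)I(w)/d$ — yields the key per-$w$ estimate $\sum_a|q_a-1/d|\le \tfrac{(d-1)^2+1}{d}\,I(w)$, which is exactly where the combinatorial prefactor of (\ref{GaiaNova2}) originates.

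The reassembly is the delicate part, and I expect it to be the main obstacle. Inserting the per-$w$ estimate leaves $\tfrac{(d-1)^2+1}{2d}\sum_w p(w|\boldsymbol{x})\,I(w)$, and one must relate this to the Bell value $I^{N,M,d}_{\mathsf{A}}$ computed in the \emph{observed} distribution $p(\boldsymbol{a}|\boldsymbol{x}')=\sum_w p(w|\boldsymbol{x}')p(\boldsymbol{a}|\boldsymbol{x}',w)$. The difficulty is that the inequality (\ref{BKPNMd}) mixes many distinct settings $\boldsymbol{x}'$, each entering the observed value with its own weight $p(w|\boldsymbol{x}')$ rather than with the target weight $p(w|\boldsymbol{x})$, so no naive factorization $I=\sum_w p(w|\boldsymbol{x})I(w)$ holds. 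I would resolve this using that (\ref{BKPNMd}) is a sum of terms $\langle\Omega_t\rangle$ with nonnegative coefficients and with $\langle\Omega_t\rangle^{(w)}\ge 0$ for each $w$: since $p(w|\boldsymbol{x}')\ge p_{\min}(w)$ for every setting appearing in the inequality, nonnegativity lets me discard the term-dependence of the weight and conclude $I^{N,M,d}_{\mathsf{A}}\ge\sum_w p_{\min}(w)I(w)$.

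Finally I would rescale cell by cell, $\sum_w p(w|\boldsymbol{x})I(w)=\sum_w \tfrac{p(w|\boldsymbol{x})}{p_{\min}(w)}\,p_{\min}(w)I(w)\le Q_M(\boldsymbol{x})\sum_w p_{\min}(w)I(w)\le Q_M(\boldsymbol{x})\,I^{N,M,d}_{\mathsf{A}}$, where nonnegativity of each $p_{\min}(w)I(w)$ licenses pulling out the maximum. Chaining the three estimates gives precisely (\ref{GaiaNova2}). The genuinely nontrivial inputs are the per-$w$ application of (\ref{guessing}), which already packages the full monogamy machinery, and the nonnegativity structure of the BKP functional that makes the replacement of $p(w|\boldsymbol{x}')$ by $Q_M(\boldsymbol{x})\,p_{\min}(w)$ legitimate; the remaining manipulations are bookkeeping.
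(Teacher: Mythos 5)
Your proof is correct and follows essentially the same route as the paper's: the same per-$w$ application of the guessing-probability bound with the identical $[(d-1)^2+1]/d$ cell-by-cell bookkeeping (peak cell bounded via the GP inequality, the remaining $d-1$ cells via normalization), followed by the same weight-swapping argument that rests on nonnegativity of the BKP terms, on $p(w|\boldsymbol{x}')\geq p_{\min}(w)$ for every setting $\boldsymbol{x}'$ in the inequality, and on the definition of $Q_M(\boldsymbol{x})$. The only organizational difference is that the paper performs the swap in one pass inside $\sum_w p(w|\boldsymbol{x})I(w)$ --- inserting the ratios $p(w|\boldsymbol{x})/p(w|\boldsymbol{x}')$ term by term and recognizing the observed correlators --- whereas you factor the same estimate through the intermediate inequality $I^{N,M,d}_{\mathsf{A}}\geq \sum_w p_{\min}(w)\,I(w)$, which is a clean repackaging rather than a different argument.
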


\begin{proof} For simplicity, but without any loss of generality, we prove
this theorem for the bipartite case. The generalization to the
multipartite case is straightforward.

As before, we denote the parties by $A$ and $B$,
while the adversary by $E$. Then, the corresponding inputs and outputs are
denoted by $x$, $y$, $z$, and $a$, $b$, and $e$, respectively.

Let us start by noting that for any probability distribution
$\{p(a,b|x,y,w)\}_{a,b,x,y}$, the maximal probability of local outcomes
obtained by any of the parties, say for simplicity Alice, must
obey the inequalities on the guessing probability [see Ineq. (7) in
the main text]. That is
\begin{equation}\label{Barbastro2}
\max_{a}p(a|x,w)\leq \frac{1}{d}\left(1+I^{2,M,d}_w
\right)
\end{equation}
for any $x=1,\ldots,M$, where by $I_w^{2,M,d}$ we have denoted the
value of the Bell expression (\ref{BKP-2d}) computed for the
probability distribution $\{p(a,b|x,y,w)\}_{a,b,x,y}$. Clearly, this bound
holds also for any $p(a|x,w)$ which together with the
normalization
\begin{equation}
p(a|x,w)=1-\sum_{\alpha\neq a}p(\alpha|x,w),
\end{equation}
means that $p(a|x,w)\geq (1/d)[1-(d-1)I^{2,M,d}_w]$,
and therefore the inequality
\begin{equation}\label{Campo2}
\left|p(a|x,w)-\frac{1}{d}\right|\leq
\frac{d-1}{d}I_w^{2,M,d}
\end{equation}
holds for any $a$ and $x$. Using then the inequality (\ref{Barbastro2}) for
$\max_ap(a|x,w)$ and (\ref{Campo2}) for the rest of $p(a|x,w)$,
we obtain that for any strategy $w$ and a measurement setting $x$,
\begin{eqnarray}\label{Guimeraes2}
D(\{p(a|x,w)\}_{a},\{\widetilde{p}(a)\})
&=&\frac{1}{2}\sum_{a}\left|p(a|x,w)-\widetilde{p}(a)\right|\nonumber\\
&\leq&\frac{(d-1)^2+1}{2d}I_w^{2,M,d}.
\end{eqnarray}

The remainder of the proof goes along exactly the same lines as in Ref.
\cite{collbeck-renner}, however, for completeness, we will recall it here.

Due to the fact that the observers do not have access to the variable
$W$, one has to average Ineq. (\ref{Guimeraes2}) over the probability
distribution $\{p(w|x,y)\}_w$ for a particular choice of measurements $x$ and
$y$.
Together with the facts that $p(a|x,w)=p(a|x,y,w)$ (no-signalling) and
$p(w|x,y)p(a|x,y,w)=p(a,w|x,y)$, this allows one to write
\begin{eqnarray}\label{ine332}
&&\hspace{-1cm}D(\{p(a,w|x,y)\}_{a,w},\{\widetilde{p}(a)p(w|x,y)\}_{a,w}
)\nonumber\\
&&=\frac{1}{2}\sum_{a,w}\left|p(a,w|x,y)-\widetilde{p}(a)p(w|x,
y)\right|\nonumber\\
&&\leq \frac{(d-1)^2+1}{2d}\sum_w
p(w|x,y)I^{2,M,d}_w
\end{eqnarray}

Let us now concentrate on the right-and side of Ineq.
(\ref{ine332}). By using Eq. (\ref{BKP-2d}), we can bound it from
above in the following way
\begin{eqnarray}\label{nier2}
&&\sum_w p(w|x,y)I^{2,M,d}_w=\nonumber\\
&&\sum_{w,\alpha}p(w|x,y)(\sr{A_{\alpha}-B_{\alpha}}_w+\sr{B_{\alpha}-A_{
\alpha+1}}_w)\nonumber\\
&&=\sum_{w,\alpha}\left(p(w|\alpha,\alpha)\frac{p(w|x,y)}{p(w|\alpha,\alpha)}\sr
{A_{\alpha}-B_{\alpha}}_w\right.\nonumber\\
&&\hspace{1.5cm}+\left.p(w|\alpha+1,\alpha)\frac{p(w|x,y)}{p(w|\alpha+1,\alpha)}
\sr{B_{\alpha}-A_{\alpha+1}}_w\right)\nonumber\\
&&\leq
Q_M(x,y)\sum_{w,\alpha}\left[p(w|\alpha,\alpha)\sr{A_{\alpha}-B_{\alpha}}
_w\right.\nonumber\\
&&\hspace{2.5cm}+\left.p(w|\alpha+1,\alpha)\sr{B_{\alpha}-A_{\alpha+1}}
_w\right]\nonumber\\
&&=Q_M(x,y)\sum_{\alpha}(\sr{A_{\alpha}-B_{\alpha}}+\sr{B_{\alpha}-A_{\alpha+1}}
)\nonumber\\
&&=Q_M(x,y)I^{2,M,d}_{AB},
\end{eqnarray}
where the subscript $w$ in the expectation values $\sr{A_{\alpha}-B_{\alpha}}_w$
and $\sr{B_{\alpha}-A_{\alpha+1}}_w$
means that they are computed for the probability distribution
$\{p(a,b|x,y,w)\}_{a,b,x,y}$, and also the convention $p(M+1,M|w)\equiv
p(1,M|w)$ is used.
Then, $I^{2,M,d}_{AB}$ is computed for the probability distribution
$\{p(a,b|x,y)\}$ observed by $A$ and $B$.

By substituting Ineq. (\ref{nier2}) to Ineq. (\ref{ine332}),
one finally obtains Ineq. (\ref{GaiaNova2}), completing the proof.
\end{proof}
One then recovers the inequality of Ref. \cite{collbeck-renner}
from Ineq. (\ref{GaiaNova2}) by exploiting the fact that $[(d-1)^2+1]/d\leq d-1$
$(d\geq2)$.
%
%
Let us also notice that one can derive
Ineq. (\ref{GaiaNova2}) using a slightly different approach, which,
for completeness, we present below.

\begin{thmm} Let
$\{p(\boldsymbol{a}|\boldsymbol{x},w)\}_{\boldsymbol{a},\boldsymbol{x}}$ be a
nonsignalling
probability distribution for any $w$ and let the probabilities
$p(\boldsymbol{x})$ be all equal. Then for any $k=1,\ldots,N$
and any choice of measurement settings $\boldsymbol{x}$:
\begin{eqnarray}\label{GaiaNova}
&&\hspace{-1cm}D(\{p(a_k,w|\boldsymbol{x})\}_{a_k,w},\{\widetilde{p}
(a_k)p(w|\boldsymbol
{ x } )\}_{a_k,w} )\nonumber\\
&&\hspace{0.5cm} =\frac{1}{2}\sum_{a_k,w}\left|p(a_k,w|\boldsymbol{x}
)-\widetilde{p}(a_k)p(w|\boldsymbol{x})\right|\nonumber\\
&&\hspace{0.5cm}\leq
\frac{(d-1)^2+1}{2d}\,\widetilde{Q}_M(\boldsymbol{x})I^
{N,M,d}_{\mathsf{A}},
\end{eqnarray}
where $I^{N,M,d}_{\mathsf{A}}$ is taken in the
probability distribution observed by the parties
$\{p(\boldsymbol{a}|\boldsymbol{x})\}$ and
\begin{equation}
\widetilde{Q}_M(\boldsymbol{x})=
\max_{w}\left[\frac{p(\boldsymbol{x}|w)}{\widetilde{p}_{\min}(w)}
\right ],
\end{equation}
where
$\widetilde{p}_{\min}(w)=\min_{\boldsymbol{x}}\{p(\boldsymbol{x}|w)\}$
with the minimum taken over all measurement settings
$\boldsymbol{x}$ appearing in the Bell inequality (\ref{BKPNMd}).

\end{thmm}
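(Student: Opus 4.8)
The plan is to reuse the proof of the preceding theorem almost verbatim, diverging only at the reweighting step, and to work in the bipartite case (the general case being identical). First I would note that the single-strategy variational-distance bound (\ref{Guimeraes2}) and its average over $W$, Ineq. (\ref{ine332}), make no reference to whether one conditions on the settings or on $w$: they rest only on the guessing-probability bound (\ref{Barbastro2}) and on no-signalling. Hence nothing changes up to Ineq. (\ref{ine332}), and it remains to bound $\sum_w p(w|x,y)\,I^{2,M,d}_w$ from above by a constant times the observed Bell value $I^{2,M,d}_{AB}$, now using $\widetilde{Q}_M$ in place of $Q_M$.

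The single new ingredient is Bayes' rule. In the chain (\ref{nier2}) each correlator $\sr{A_\alpha-B_\alpha}_w$ is reweighted from $p(w|x,y)$ to the conditional $p(w|\alpha,\alpha)$ naturally attached to that term in the observed expression (\ref{BKP-2d}), the controlling factor being the ratio $p(w|x,y)/p(w|\alpha,\alpha)$. Writing $p(w|\boldsymbol{x})=p(\boldsymbol{x}|w)p(w)/p(\boldsymbol{x})$ and invoking the hypothesis that the settings probabilities $p(\boldsymbol{x})$ are all equal, the priors $p(w)$ and $p(\boldsymbol{x})$ cancel, so that
\begin{equation}
\frac{p(w|x,y)}{p(w|\alpha,\alpha)}=\frac{p(x,y|w)}{p(\alpha,\alpha|w)}\leq \frac{p(x,y|w)}{\widetilde{p}_{\min}(w)}\leq \widetilde{Q}_M(x,y),
\end{equation}
the penultimate step using that $(\alpha,\alpha)$ is one of the settings over which $\widetilde{p}_{\min}(w)$ minimizes. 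Substituting this in place of $Q_M(x,y)$ throughout (\ref{nier2}) yields $\sum_w p(w|x,y)I^{2,M,d}_w\leq \widetilde{Q}_M(x,y)I^{2,M,d}_{AB}$, and feeding this into (\ref{ine332}) gives Ineq. (\ref{GaiaNova}).

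Equivalently, the entire argument can be summarized as the observation that under equal $p(\boldsymbol{x})$ one has $Q_M(\boldsymbol{x})=\widetilde{Q}_M(\boldsymbol{x})$, so that the present theorem is really the previous one re-expressed and the bounds (\ref{GaiaNova2}) and (\ref{GaiaNova}) coincide. I therefore expect no genuine mathematical obstacle; the only point demanding care is bookkeeping. One must check that every ratio appearing in (\ref{nier2}) compares $p(w|\boldsymbol{x})$ against a conditional $p(w|\boldsymbol{x}_0)$ whose setting $\boldsymbol{x}_0$ genuinely lies in the support of the Bell inequality (\ref{BKP-2d}) --- in the bipartite case the pairs $(\alpha,\alpha)$ and $(\alpha+1,\alpha)$ --- so that the minimization defining $\widetilde{p}_{\min}(w)$ over exactly those settings is the correct one, and that the equal-$p(\boldsymbol{x})$ hypothesis is applied only across that same set. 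The passage to $N$ parties then follows, as before, by replacing the bipartite correlators with their counterparts in (\ref{BKPNMd}) without any change to the reweighting logic.
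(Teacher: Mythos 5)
Your proof is correct, but it follows a different route from the paper's. The paper proves this theorem from scratch by folding the input bias into the Bell expression itself: it defines a \emph{weighted} quantity $I_w^{2,M,d}=\sum_\alpha\bigl[p(\alpha,\alpha|w)\sr{A_\alpha-B_\alpha}_w+p(\alpha+1,\alpha|w)\sr{B_\alpha-A_{\alpha+1}}_w\bigr]$, derives modified monogamy relations $I_w^{2,M,d}/\widetilde{p}_{\min}(w)+1\geq d\,p(X_i=E_j|w)$ and hence a modified guessing-probability and variational-distance bound at the single-$w$ level, and only then averages over $w$, using the identity $I^{2,M,d}_{AB}=\sum_w\frac{p(w)}{p(x,y)}I_w^{2,M,d}$ (valid under equal $p(\boldsymbol{x})$) to recover the observed Bell value. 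You instead keep the \emph{unweighted} single-$w$ bounds from the preceding theorem's proof untouched up to the averaged inequality, and inject the hypothesis of equal settings probabilities only at the final reweighting step, where Bayes' rule turns each ratio $p(w|x,y)/p(w|\alpha,\alpha)$ into $p(x,y|w)/p(\alpha,\alpha|w)\leq\widetilde{Q}_M(x,y)$; in effect you derive the theorem as a corollary of the previous one via the observation $Q_M(\boldsymbol{x})=\widetilde{Q}_M(\boldsymbol{x})$ under equal priors --- an equivalence the paper itself only records as a closing remark after giving its independent derivation. Your route is more economical and makes the logical relationship between the two theorems transparent; the paper's route is self-contained and shows how the monogamy machinery adapts when the bias is carried inside the Bell expression (i.e., where $\widetilde{p}_{\min}(w)$ enters already at the level of a single strategy $w$ rather than only in the averaging), which is of some independent interest. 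Two small points of care in your argument, both handled the same way as in the paper: pulling $\widetilde{Q}_M$ out of the sum requires nonnegativity of the correlators $\sr{\cdot}_w$, which holds since $\sr{\Omega}=\sum_{i\geq1}iP([\Omega]=i)\geq0$; and, as you correctly note, the settings $(\alpha,\alpha)$ and $(\alpha+1,\alpha)$ against which you compare must lie in the set over which $\widetilde{p}_{\min}(w)$ is minimized, which they do by its definition.
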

\begin{proof}For simplicity but without any loss of generality, we prove
this theorem for the bipartite case. The generalization to the multipartite case
is straightforward.

As before, we denote the parties by $A$ and $B$,
while the adversary by $E$. Then, the corresponding inputs and outputs are
denoted by $x$, $y$, $z$, and $a$, $b$, and $e$, respectively.

Let us start by noting that, by analogy to the case considered in
the main text [see Ineq. (6) there], for any $w$, the probability
distribution $\{p(a,b|x,y,w)\}_{a,b,x,y}$ satisfies the following monogamy
relations
\begin{equation}\label{monog_mod}
\frac{I_{w}^{2,M,d}}{\widetilde{p}_{\min}(w)}+1\geq d p(X_i=E_j|w) \qquad
(X=A,B)
\end{equation}
for any pair $\{i,j\}$ $(i,j=1,\ldots,M)$. In the above
\begin{eqnarray}
\hspace{-1cm}I_w^{2,M,d}&=&\sum_{\alpha=1}^{M}[p(\alpha,\alpha|w)\sr{A_{\alpha}
-B_ {\alpha} }
_w\nonumber\\
&&\hspace{0.5cm}+p(\alpha+1,\alpha|w)\sr { B_ { \alpha} -A_ { \alpha+1} }_w ],
\end{eqnarray}
is a modified BKP Bell expression taking into account that the
inputs $x,y$ are generated with the biased probabilities
$p(x,y|w)$, all correlators $\sr{A_{\alpha} -B_ {\alpha} }_w$
and $\sr { B_{ \alpha } -A_ { \alpha+1} }_w$ are computed for the distribution
$\{p(a,b|x,y,w)\}_{a,b,x,y}$, and now
\begin{equation}
\widetilde{p}_{\min}(w)=\min_{\alpha=1,\ldots,M}\{p(\alpha,\alpha|w),p(\alpha+1,
\alpha|w)\},
\end{equation}
where the convention $p(M+1,M|w)\equiv p(1,M|w)$ is used.

The monogamy relations (\ref{monog_mod}) imply (see the main text
for the argument in favor of this fact) the bound on the
probability of the adversary when using the strategy $w$ to guess
the outcomes of any of the measurements performed by one of the
parties, say for concreteness Alice (but the same bound holds for
outcomes of party $B$):
\begin{equation}\label{Barbastro}
\max_a p(a|x,w)\leq
\frac{1}{d}\left(1+\frac{I^{2,M,d}_w}{\widetilde{p}_{\min}(w)}\right)\quad
(x=1,\ldots,M).
\end{equation}

Clearly, this bound holds also for any $p(a|x,w)$ which
together with the normalization
\begin{equation}
p(a|x,w)=1-\sum_{\alpha\neq a}p(\alpha|x,w),
\end{equation}
mean that $p(a|x,w)\geq (1/d)-(d-1)(I^{2,M,d}_w/d\widetilde{p}_{\min}(w))$,
and therefore the inequality
\begin{equation}\label{Campo}
\left|p(a|x,w)-\frac{1}{d}\right|\leq
\frac{d-1}{d}\frac{I_w^{2,M,d}}{\widetilde{p}_{\min}(w)}.
\end{equation}
holds for any $a$ and $x$. Using then the inequality (\ref{Barbastro}) for
$\max_ap(a|x,w)$ and (\ref{Campo}) for the rest of $p(a|x,w)$, we obtain that
for any strategy $w$,
\begin{eqnarray}\label{Guimeraes}
D(\{p(a|x,w)\}_{a},\{\widetilde{p}(a)\})&=&\frac{1}{2}\sum_{a}|p(a|x,
w)-\widetilde{p}(a)|\nonumber\\
&\leq&
\frac{(d-1)^2+1}{2d}\frac{I_w^{2,M,d}}{\widetilde{p}_{\min}(w)}.
\end{eqnarray}
Now, since the parties do not have access to $W$, one needs further to
average Ineq. (\ref{Guimeraes}) over the probability distribution
$\{p(w|x,y)\}_w$ for a particular choice of measurements $x$ and $y$.
This, together with the facts that $p(a|x,w)=p(a|x,y,w)$ (no-signalling)
and $p(w|x,y)=p(x,y|w)p(w)/p(x,y)$ implying that
$p(w|x,y)p(a|x,y,w)=p(a,w|x,y)$, allows one to write

\begin{eqnarray}\label{ine33}
&&\hspace{-1cm}D(\{p(a,w|x,y)\}_{a,w},\{\widetilde{p}(a)p(w|x,y)\}_{a,w}
)\nonumber\\
&&=\frac{1}{2}\sum_{a,w}\left|p(a,w|x,y)-\widetilde{p}(a)p(w|x,
y)\right|\nonumber\\
&&\leq \frac{(d-1)^2+1}{2d}\sum_w
\frac{p(x,y|w)}{\widetilde{p}_{\min}(w)}\frac{p(w)}{p(x,y)}I^{2,M,d}
_w\nonumber\\
&&\leq
\frac{(d-1)^2+1}{2d}\widetilde{Q}_M(x,y)\sum_w\frac{p(w)}{p(x,y)}I^{2,M,d}_w,
\end{eqnarray}
with $\widetilde{Q}_M(x,y)=\max_{w}\left[p(x,y|w)/\widetilde{p}_{\min}(w)\right]
.$ In order to obtain Ineq. (\ref{GaiaNova}) from Ineq. (\ref{ine33})
it is enough to notice that
\begin{equation}
p(a,b|x,y)=\sum_{w}p(w|x,y)p(a,b|x,y,w)
\end{equation}
which, with the aid of the assumption that all the probabilities $p(x,y)$ are
equal, further translates to
\begin{equation}
I^{2,M,d}_{AB}=\sum_{w}\frac{p(w)}{p(x,y)}I^{2,M,d}_w,
\end{equation}
where $I^{2,M,d}_{AB}$ is computed for the observed probability distribution
$\{p(a,b|x,y)\}$ and the probabilities $p(x,y)=\sum_wp(w)p(x,y|w)$ are assumed
to be equal for all $x,y$. This completes the proof.
\end{proof}
Let us finally notice that under the assumption, which we make above, that
all $p(x,y)$ are equal, it holds that
$Q_M(\boldsymbol{x})=\widetilde{Q}_M(\boldsymbol{x})$.

\end{document}